\newif\iflong\longfalse
\newif\ifscreen\screenfalse
\newtheorem{definition}{Definition}[section]
\newtheorem{example}[definition]{Example}
\newtheorem{lemma}[definition]{Lemma}
\newtheorem{theorem}[definition]{Theorem}
\newcommand{\HOLTokenTransEnd}{$\rightarrow$}
\newcommand*\hl{}
\title{Unique Solutions of Contractions, CCS,\\
and their HOL Formalisation}
\author{Chun Tian
\institute{Fondazione Bruno Kessler%
\thanks{\hl{Part of this work was
  carried out when the first author was studying at Universit\`a di Bologna.}}
\\Trento, Italy}
% =======
% \institute{Fondazione Bruno Kessler\thanks{Part of this work was
%     carried out when the author was studying at University of
%     Bologna.}\\Trento, Italy}
% >>>>>>> cc6dee720d90c66ca8c543bc3826fbc1f9c81686
\email{ctian@fbk.eu}
\and 
Davide Sangiorgi
\institute{Universit\`a di Bologna and INRIA
\\Bologna, Italy}
\email{davide.sangiorgi@unibo.it}
}
\begin{document}
\maketitle
\begin{abstract}
  The unique solution of contractions is a proof technique for
  bisimilarity that overcomes certain syntactic constraints of
  Milner's ``unique solution of equations'' technique.  The paper
  presents an overview of a rather comprehensive formalisation of the
  core of the theory of CCS in the HOL theorem prover (HOL4), with a
  focus towards the theory of unique solutions of contractions.  (The
  formalisation consists of about 20,000 lines of proof scripts in
  Standard ML.)  Some refinements of the theory itself are obtained.
  In particular we remove the constraints \hl{on summation,
  which must be weakly-guarded,} by moving to \emph{rooted
  contraction}, that is, the coarsest
  precongruence contained in the contraction preorder.

% theorem for (rooted) weak
% bisimilarity have severe syntactical limitations, i.e.\;the
% equations must be both strongly guarded and sequential. By replacing
% equations with special inequations called ``contraction'', Sangiorgi
% has proven the ``unique solution of contractions'' theorem for weak
% bisimialarity, which requires only weakly guarded equations. However,
% contraction is not (pre)congruence under direct sums of processes.
% To overcome this difficulty, we further moved to ``rooted
% contraction'', which is the coarsest precongruence contained in
% contraction bisimilarity. Using rooted contraction one obtains an
% unique-solution theorem that is valid for 
% \emph{rooted bisimilarity} (hence also for bisimilarity itself) while
% requires true weak guardness (with direct sums). All results were
% formalized in HOL theorem prover (HOL4), as part of a rather complete
% formalisation of process algebra CCS. We show some highlights of this
% formalisation work.
\end{abstract}

% part 1 (Sangiorgi)
\section{Introduction}

A prominent proof method for bisimulation, put forward by Robin Milner and widely used in his
landmark CCS book \cite{Mil89} is the
\emph{unique solution of equations}, whereby two tuples of processes are
componentwise bisimilar if they are solutions 
of the same system of equations.
This method is important in verification techniques and tools
based on algebraic reasoning \cite{BaeBOOK,theoryAndPractice,RosUnder10}. 

In the versions of Milner's unique solution theorems for proving that all
solutions are weakly (or rooted) bisimilar (in practice these are the most
relevant cases), however,
Milner's proof method has severe syntactical limitations, such that
the equations must be ``guarded and sequential,'' that is, the
variables of the equations may only be used underneath a visible
prefix and proceed, in the syntax tree, only by the sum and prefix operators.
One way of overcoming such limitations is to replace equations
 with special inequations called
\emph{contractions} \cite{sangiorgi2015equations,sangiorgi2017equations}. Contraction is a
preorder that, roughly, places some efficiency
constraints on processes.  The uniqueness of solutions of a system of contractions
is defined as with systems of equations: any two solutions must be bisimilar.
The difference with equations is in the meaning of a solution:
in the case of contractions the solution is evaluated with respect to
the contraction preorder, rather than bisimilarity. 
With contractions, most syntactic limitations of the unique-solution theorem can be
removed.  One constraint that still remains in
\cite{sangiorgi2017equations} (in which the issue is bypassed using a more
restrictive CCS syntax)
is the occurrences of direct sums, due to the failure of the
substitutivity of contraction under direct sums.

The main goal of the work described in this paper is a rather
comprehensive formalisation of the core of the theory of CCS in the HOL
theorem prover (HOL4),  with a focus on the theory of unique solutions of contractions.
The formalisation, however, is not confined to the theory of unique
solutions of equations, but embraces a significant portion the theory of CCS \cite{Mil89}
(mostly because the theory of unique solutions relies on a large number of more fundamental results).
Indeed the formalisation encompasses the basic properties of strong and weak
bisimilarity (e.g. the fixed-point and substitutivity properties), the
basic properties of
rooted bisimilarity (the congruence induced by weak
bisimilarity, also called observation congruence), and
their algebraic laws. Further extensions (beyond Nesi
\cite{Nesi:1992ve}) include four versions of ``bisimulation up to''
techniques (e.g., bisimulation up-to bisimilarity) \cite{Mil89,sangiorgi1992problem}, and the
expansion and contraction preorder (two
efficiency-like refinements of weak bisimilarity). Concerning rooted bisimilarity, the formalisation
includes Hennessy Lemma and Deng Lemma (Lemma 4.1 and 4.2 of
\cite{Gorrieri:2015jt}),
 and two long proofs saying the rooted bisimilarity is the coarsest (largest)
 congruence contained in (weak) bisimilarity: one following Milner's
 book \cite{Mil89}, with the hypothesis that no processes can use up
 all labels;
the other without such hypothesis, essentially formalising van Glabbeek's paper \cite{van2005characterisation}.
Similar theorems are proved for the rooted contraction preorder.
In this respect, the work is an extensive experiment with the use of the HOL theorem prover and its
most recent developments, including a package for expressing coinductive definitions.

From the view of CCS theory, this formalisation has offered us the possibility of
further refining the theory of unique solutions of
equations, as formally proving a previously known result gives us a
chance to see \emph{what's really needed} for establishing that result.
In particular, the existing theory \cite{sangiorgi2017equations} has
placed limitations on the body of the contractions due to the
substitutivity problems of weak bisimilarity and other behavioural relations with respect
to the sum operator.
We have thus refined the contraction-based proof technique, by moving to  
\emph{rooted contraction}, that is, the coarsest precongruence contained in the contraction
preorder. The resulting unique-solution theorem is now valid for
\emph{rooted bisimilarity} (hence also for bisimilarity itself), and places no 
constraints on the occurrences of sums.

% \finish{ I had to remove the reference \cite{Tian:2017wrba} here
% since it would further 
%   weaken this paper. } 

Another advantage of the formalisation is 
that we can take advantage of results about different 
equivalences or preorders that share a similar  proof structure. 
Examples are: the results that rooted bisimilarity and rooted contraction are,
respectively, the coarsest congruence contained in weak bisimilarity 
and the coarsest precongruence contained in the contraction  preorder; 
the result about unique solution of equations for weak bisimilarity that uses the
contraction preorder as an auxiliary relation, and other unique solution results (e.g., 
the one for rooted in which
the auxiliary relation is rooted contraction); various forms of enhancements of the bisimulation
proof method (the `up-to' techniques).
In these cases,  moving between proofs there are only a few places in
which the HOL proof scripts have to be modified.
Then the successful termination of the proof gives us a guarantee that the proof is
complete and trustworthy, removing the risk 
of overlooking or missing details as in hand-written proofs.

% to describe
% The purpose of this paper is twofold. 
% On the one hand, 
% On the other hand, we provide a  
%  comprehensive formalisation  of the core of the theory of CCS 
%  in the HOL
% theorem prover (HOL4). The formalisation  includes the proofs of
% Milner's 3 ``unique solution of equations'' theorems and
% contractions discussed in the present paper, but is not limited to it (partly because such
% theorems rely on a number of more fundamental results):
% indeed the formalisation encompasses the basic properties of strong and weak
% bisimilarity (e.g. the fixed-point and substitutivity properties), 
% their algebraic theory, various versions of ``bisimulation up to''
% techniques (e.g., bisimulation up-to expansion),
% the basic properties  of rooted bisimilarity. 
% Thus the work is an extensive experiment with the use of the HOL theorem provers and its
% most recent developments, including a facility  for expressing coinductive definitions.

% % Considering the relationship between bisimilarity and rooted
% % bisimilarity, the formalisation includes the proof that the latter is the coarsest
% % congruence included in the former, for which two proofs are formalized: one as in Milner's
% % book,  requiring the hypothesis that  no processes can use all labels; the other without
% % such hypothesis, essentially formalising van Glabbek's paper \cite{van2005characterisation}.
% % Similar theorems are proved for rooted contractions wrt the contraction preorder.

\paragraph{Structure of the paper} 
Section~\ref{ss:ccs} presents basic background materials on CCS,
including its syntax, operational semantics, bisimilarity and rooted
bisimilarity. Section~\ref{s:eq} discussed equations and contractions.
 Section~\ref{ss:new} presents rooted contraction and the related
 unique-solution result for rooted bisimilarity. Section~\ref{s:for}
 \hl{highlights} our formalisation in HOL4. Finally,  Section~\ref{s:rel} and
 \ref{s:concl} discuss related work, conclusions,  and  a few
 directions for future work.
% \section{Background}
% \label{s:back}

\section{CCS}
\label{ss:ccs}

We assume  a possibly infinite set of names $\mathscr{L} = \{a, b,
\ldots\}$ forming input and $\overline{\mbox{output}}$ actions, plus a special invisible
action $\tau \notin \mathscr{L}$, and a set of variables $A,B,
\ldots$ for defining recursive behaviours.
Given a deadlock $\nil$, the class of CCS processes is then inductively defined from $\nil$ by the operators
of prefixing, parallel composition, summation (binary choice), restriction, recursion and \hl{relabeling}:
\begin{equation*}
\begin{array}{ccl}
\mu  & := &  \tau \hspace{.3pt} \; \midd \; a  \; \midd \;  \outC a  \\
P  & := &  \nil \; \midd \;  \mu . P \; \midd \;  P_1 |  P_2 \; \midd  \;
P_1 + P_2 \; \midd % \; \mu . P\; \midd  \; 
  (\res a\!)\, P  \;  \midd \;  A \; \midd \; \recu A  P
\; \midd \; P\; [r\!\!f]  % relabelling
\end{array}
\end{equation*}
%We sometimes omit trailing $\nil$, e.g., writing $a|b$ for $a.\nil |b .\nil $ .
The operational semantics of CCS is given by means of
a Labeled Transition System (LTS), shown in Fig.~\ref{f:LTSCCS} as SOS
rules (the symmetric version of the two rules for
parallel composition and the rule for sum are omitted).
A CCS expression uses only \emph{weakly-guarded sums} if all occurrences of
the sum operator are of the form $\mu_1.P_1 + \mu_2.P_2 + \ldots
+ \mu_n.P_n$, for some $n \geq 2$.
 The \emph{immediate derivatives} of a
process $P$ are the elements of the set $\{P' \st P \arr\mu P' \mbox{
  for some $\mu$}\}$.
% We use $\ell$ to range over
%  visible actions (i.e.~inputs or outputs, excluding  $\tau$).
\ifscreen \begin{figure*}[h] \else \begin{figure*}[t] \fi
\begin{center}
\vskip .1cm
 $\displaystyle{  \over  \mu.  P    \arr\mu
P } $  $ \hb$   
\hskip .5cm
 $\displaystyle{   P \arr\mu   P' \over   P + Q   \arr\mu
P'  } $  $ \hb$   
\hskip .5cm
 $\displaystyle{   P \arr\mu   P' \over   P | Q   \arr\mu
P' | Q } $  $ \hb$   
\hskip .3cm
  $\; \;$  $\displaystyle{ P \arr{ a}P' \hk \hk  Q
\arr{\outC a }Q'  \over     P|  Q \arr{ \tau} P'
|  Q'  }$ 
\\
\vspace{.2cm}
$\displaystyle{ P \arr{\mu}P' \over
 (\res a\!)\, P   \arr{\mu} (\res a\!)\, P'} $ $ \mu \neq a, \outC a$
$ \hb$
$\displaystyle{ P \sub {\recu A P} A \arr{\mu}P' \over
 \recu A P   \arr{ \mu} P'  } $
\hskip .5cm  
$\displaystyle{ P \arr{\mu} P' \over
 P \;[r\!\!f] \arr{r\!\!f(\mu)} P' \;[r\!\!f]} $ $\forall a.\, r\!\!f(\outC a) = \overline{r\!\!f(a)}$
$ \hb$ %  &
\end{center}
\caption{\hl{Structural Operational Semantics} of CCS}
\label{f:LTSCCS}
\end{figure*}
Some standard notations for transitions: $\Arr\epsilon$ is the 
reflexive and transitive closure of $\arr\tau$, and 
$\Arr \mu $ is $\Arr\epsilon \arr\mu \Arr\epsilon$ (the
composition of the three relations).
Moreover,   
$ P \arcap \mu P'$ holds if $P \arr\mu P'$ or ($\mu =\tau$ and
$P=P'$); similarly 
$ P \Arcap \mu P'$ holds if $P \Arr\mu P'$ or ($\mu =\tau$ and
$P=P'$).
We write $P \:(\arr\mu)^n P'$ if $P$ can become $P'$ after performing
$n$ $\mu$-transitions. Finally, $P \arr\mu$ holds if there is $P'$
with $P \arr\mu P'$, and similarly for other forms of transitions.

\paragraph{Further notations}
Letters  $\R$, $\S$ range over relations.
We use infix notation for relations, e.g., 
$P \RR Q$ means that $(P,Q) \in \R$.
We use a tilde to denote a tuple, with countably many elements; thus
the tuple may also be infinite.
 All
notations  are  extended to tuples componentwise;
e.g., $\til P \RR \til Q$ means that $P_i \RR Q_i$, for  each  
component $i$  of the tuples $\til P$ and $\til Q$.
And $\ct{\til P}$ is the process obtained by replacing each hole
$\holei i$ of the  context $\qct$ with $P_i$.  
We write $
\ctx \R$ for the closure of a relation under contexts. Thus $P\: \ctx \R\: Q$
means that there are context $\qct$ and tuples $\til P,\til Q$ with
$P =  \ct{\til P}, Q =  \ct{\til Q}$ and $\til P \RR \til Q$.
We use the symbol $\DSdefi$ for abbreviations. For instance, $P \DSdefi G $, where
$G$ is some expression, means that $P$ stands for the expression $G$.
If $\leq$ is a preorder, then  $\geq$  is its inverse (and
conversely).

\subsection{Bisimilarity and rooted bisimilarity}
\label{ss:BiEx}

The equivalences we consider here are mainly \emph{weak} ones, in that they
abstract from the number of internal steps being performed:
\begin{definition}%[bisimilarity]
\label{d:wb}
A process relation ${\R}$ is a \textbf{bisimulation} if, whenever
 $P\RR Q$, %for all $\mu$ 
we have:
\begin{enumerate}
\item $P \arr\mu P'$ implies that there is $Q'$ such that $Q \Arcap \mu Q'$ and $P' \RR Q'$;\vspace{-4pt}
\item $Q \arr\mu Q'$,implies that there is $P'$ such that $P \Arcap
  \mu P'$ and $P' \RR Q'$\enspace.
% (the converse of (1) on the actions from $Q$)
\end{enumerate}  
 $P$ and $Q$ are \textbf{bisimilar},
written as $P \wb Q$, if $P \RR Q$ for some bisimulation $\R$.
\end{definition}

We sometimes call bisimilarity the \emph{weak} one, to
distinguish it from \emph{strong} bisimilarity ($\sim$),
obtained by replacing in the above definition   the weak answer $
Q\Arcap\mu Q'$ with the strong  $Q \arr \mu Q'$.
Weak bisimilarity is not preserved by the sum operator (except for
guarded sums). For this, Milner introduced \emph{observational congruence}, also called \emph{rooted
  bisimilarity} \cite{Gorrieri:2015jt,Sangiorgi:2011ut}:
\begin{definition}%[rooted bisimilarity]
\label{d:rootedBisimilarity}
Two processes $P$ and $Q$ are \textbf{rooted bisimilar}, written as $P
\rapprox Q$, if we have:
\begin{enumerate}
 \item  $P \arr\mu P'$ implies that there is $Q'$ such that $Q
   \Arr\mu Q'$ and $P' \wb Q'$;
 \item  $Q \arr\mu Q'$ implies that there is $P'$ such that $P
   \Arr\mu P'$ and $P' \wb Q'$\enspace.
%the converse of (1) on the actions from $Q$.
\end{enumerate}
\end{definition}
% Besides reducing the rooted bisimiarity of two processes to
% the bisimilarities of their first-step transition ends, this definition also brings a proof technique for proving the
% rooted bisimiarity by constructing a bisimulation:
% \begin{lemma}{(Rooted bisimilarity by constructing a bisimulation)}
% \label{l:obsCongrByWeakBisim}
% Given a (weak) bisimulation $\RR$, if two processes $P$ and $Q$
% satisfies the following properties:
% \begin{enumerate}
%  \item  $P \arr\mu P'$ implies that there is $Q'$ such that $Q
%    \Arr\mu Q'$ and $P' \RR Q'$;
% \item the converse of (1) on the actions from $Q$.
% \end{enumerate}
% then $P$ and $Q$ are rooted bisimilar, i.e.~$P \approx^c Q$.
% \end{lemma}

\begin{theorem}
\label{t:rapproxCongruence}
$\rapprox$ is a congruence in CCS, and it is the
coarsest congruence contained in $\approx$.
\end{theorem}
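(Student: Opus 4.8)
The plan is to establish the two assertions separately: first that $\rapprox$ is an equivalence preserved by every CCS operator, and then that it sits between $\approx$ and every congruence refining $\approx$. A fact I would prove at the outset and reuse throughout is that $\rapprox \subseteq \approx$: if $P \rapprox Q$ and $P \arr\mu P'$, then by Definition~\ref{d:rootedBisimilarity} there is $Q'$ with $Q \Arr\mu Q'$ and $P' \approx Q'$; since $\Arr\mu \subseteq \Arcap\mu$, the relation $\rapprox \cup \approx$ satisfies the clauses of Definition~\ref{d:wb}, so it is a bisimulation and hence $\rapprox \subseteq \approx$. Reflexivity and symmetry of $\rapprox$ are immediate, and transitivity follows by matching the first step of a composed answer with $\rapprox$ and the remainder with $\approx$.

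For the congruence part I would then treat the operators one by one. Prefix is trivial: the only move $\mu.P \arr\mu P$ is answered by $\mu.Q \Arr\mu Q$ with $P \approx Q$. The sum case is where the definition of $\rapprox$ pays off: given $P \rapprox Q$, a move of $P + R$ either comes from $R$ (reproduced verbatim by $Q + R$) or is a move $P \arr\mu P'$ answered by $Q \Arr\mu Q'$; because this answer contains a genuine $\mu$-step it can be replayed as $Q + R \Arr\mu Q'$, which would fail had the definition allowed the stuttering answer $\Arcap\mu$. For parallel composition, restriction and relabelling I would argue directly from the transition rules, lifting the answer $Q \Arr\mu Q'$ through the operator and observing that the resulting derivatives are related by $\approx$ rather than $\rapprox$ — exactly what the bisimulation clause on derivatives requires; here I rely on the corresponding substitutivity properties of $\approx$ for these three operators.

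The genuinely hard case is recursion. To derive $\recu A P \rapprox \recu A Q$ from $P \rapprox Q$ I would lift $\rapprox$ to open terms (closing under substitution of the recursion variable), and build a candidate relation on terms built over $\recu A P$ and $\recu A Q$, matching the unfolding rule for one against that for the other and falling back to $\approx$ on the continuations. This is the step where a ``bisimulation up to $\approx$'' enhancement and a substitution lemma are indispensable, and I expect it to be the main obstacle.

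For the second assertion, $\rapprox \subseteq \approx$ is the inclusion already proved, and $\rapprox$ is a congruence by the first part, so it remains to show that every congruence $\R$ with $\R \subseteq \approx$ satisfies $\R \subseteq \rapprox$. Suppose $P \RR Q$ and pick a name $w$ not occurring in $P$ or $Q$. Congruence gives $w.\nil + P \RR w.\nil + Q$, hence $w.\nil + P \approx w.\nil + Q$. For a move $P \arr\mu P'$ I would take the answer $w.\nil + Q \Arcap\mu T$ with $P' \approx T$ from Definition~\ref{d:wb}, and argue it cannot be the stuttering answer: a stutter would force $P' \approx w.\nil + Q$, but $w.\nil + Q \arr w \nil$ while the derivative $P'$ contains no $w$ and so cannot answer a $w$-move, a contradiction. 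Hence the answer performs a real first step, which — again because $w$ is fresh — must come from the $Q$ summand, yielding $Q \Arr\mu T$ with $P' \approx T$; the symmetric clause is identical, so $P \rapprox Q$. The one delicate point is the existence of $w$: this is Milner's argument and presupposes that processes cannot exhaust the set of labels, and to dispense with that hypothesis one follows van Glabbeek's alternative characterisation instead.
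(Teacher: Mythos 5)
Your proposal is correct and follows essentially the same route as the paper: the coarsest-congruence direction is exactly Milner's fresh-summand argument (testing with $w.\nil + P \approx w.\nil + Q$ and ruling out the stuttering answer), which the paper formalises under the slightly weaker \texttt{free\_action} hypothesis, with van Glabbeek's proof as the assumption-free alternative --- precisely the two options you identify, and the congruence half is likewise handled operator by operator. The only discrepancy worth noting is that the paper's formalised notion of congruence quantifies over contexts (\texttt{CONTEXT}) built without the recursion operator, so the recursion case you single out as the main obstacle is side-stepped in the formalisation rather than proved.
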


% \finish{Remove expansion here? --Chun}
% The bisimulation proof method can be enhanced by means of \emph{up-to
%   techniques}. One of the most useful auxiliary relations in up-to
% techniques  is the \emph{expansion} relation  $\expa$ \cite{SaMi92}. This is an asymmetric version
% of $\wb$ where $P \expa Q$  means that 
%  $P \wb Q$,
% but also that $Q$  achieves  the same as  $P$ 
% with  no more work, i.e.~with no more $\tau$ actions.
% Intuitively, if $ P \expa Q$, we can think of $Q$ as being 
% at least as fast as $P$
% or, more generally, we can think that $P$  uses at least as many resources as $Q$.

% \begin{definition}[expansion]
% \label{d:expa}
% A process relation ${\R}$ 
%  is an {\em  expansion} if, whenever
% we have $P\RR Q$, %for all $\mu$ 
% \begin{enumerate}
% \item   $P \arr\mu P'$ implies that there is $Q'$ with $Q \arcap \mu
%   Q'$
%  and $P' \RR Q'$;
% \item 
%     $Q \arr\mu Q'$   implies that there is $P'$ with $P \Arr \mu
%  P'$ and $P' 
% \RR Q'$.
% \end{enumerate}
%  $P$  {\em expands} $Q$, written
% $P  \expa Q$, 
% if $P \RR Q$,  for some expansion $\R$. 
% \end{definition}

% As bisimilarity, expansion is preserved by all operators but sum.
%An example of up-to technique involving expansion is 
\section{Equations and contractions}
\label{s:eq}

\subsection{Systems of  equations}
\label{ss:SysEq}

Uniqueness of  solutions of equations \cite{Mil89} intuitively says that if  a context $\qct$ obeys
certain  conditions, 
then all processes $P$  that satisfy the equation $ P \wb \ct P$ are
bisimilar with each other.
We need variables to write equations. We  use
 capital
letters  $X,Y,Z$
 for  these variables and call them \emph{\behav\  variables}.
 The body of an equation is a CCS expression
possibly containing \behav\  variables. Thus such expressions, ranged
over by $E$, live in the CCS
grammar extended with \behav\  variables.
% , which we  call 
%  \emph{extended CCS}. 

\begin{definition}
Assume that, for each $i$ of 
 a countable indexing set $I$, we have variables $X_i$, and expressions
$E_i$ possibly containing  such variables. 
Then 
$\{  X_i = E_i\}_{i\in I}$
is 
  a \emph{system of equations}. (There is one equation for each variable $X_i$.)
\end{definition}

We write $E[\til P]$ for the expression resulting from $E$ by
replacing each variable $X_i$   with the process $P_i$, assuming
$\til P$ and $\til X$ have the same length. (This is syntactic
replacement.) 
% The components of $\til P$ need not be
%  different from each other, as it must be for the variables $\til X$.
% If the system has infinitely many equations,
% the  tuples $\til P$ and $\til X$
%  are infinite too.
\begin{definition}
Suppose  $\{  X_i = E_i\}_{i\in I}$ is a system of equations: 
\begin{itemize}
\item
 $\til P$ is a \emph{solution of the 
system of equations  for $\wb$} 
if for each $i$ it holds
that $P_i \wb E_i [\til P]$;

\item it %the  system
 has \emph{a unique solution for $\wb$}  if whenever 
 $\til P$ and $\til Q$ are both solutions for $\wb$, then $\til P \wb
 \til Q$. 
\end{itemize} 
 \end{definition}

% Examples of systems with a  unique solution for $\wb$ are: 
% \begin{enumerate}
% \item
% $ X = a. X$ 

% \item 
% $ X_1 = a.  X_2$, $ X_2 = b.  X_1$  

% \end{enumerate}

For instance, the solution of the equation 
$ X = a. X$ 
is  the process
$R \DSdefi \recu A {\, (a. A)}$, and for any other solution $P$ we have $P \wb R$.
In contrast, the equation 
 $X = a|  X$ has solutions that may be quite different, \hl{for instance,
 $K$ and $K | b$, for $K \DSdefi \recu K {\, (a. K)}$. (Actually any process capable of
continuously performing $a$ actions (while behaves arbitrarily on
other actions) is a solution for $X = a|  X$.)}
 
% The unique solution of the system (1), modulo $\wb$,  is the constant $K \Defi a
% . K$:  for any other solution $P$ we have $P \wb K$.
% The unique solution of (2), modulo $\wb$, are the constants $K_1 , K_2$
% with $K_1 \Defi a . K_2$ and $K_2 \Defi b. K_1$; again, for any other
% pair of solutions $P_1,P_2$ we have $K_1 \wb P_1$ and $K_2 \wb P_2$.
% Examples of systems that do not have a unique solution are: 
% \begin{enumerate}
% \item $X = X $ 

% \item $X = \tau . X$
% \item $X = a | X$

% \end{enumerate} 
% All processes are solutions of (1) and (2); examples of solutions for
% (3) are $K$ and $K | b$, for $K \Defi a
% . K$.

\begin{definition}[guardedness of equations]
\label{def:guardness}
A system of equations 
$\{  X_i = E_i\}_{i\in I}$
 is 
\begin{itemize}
\item \emph{weakly guarded} if, in each $E_i$, each occurrence of
  an \behav\  variable is underneath a prefix;

\item \emph{(strongly) guarded} if, in each $E_i$, each occurrence of
  an \behav\  variable is underneath a \textbf{visible} prefix;

\item \emph{sequential} if, in each $E_i$, each of its subexpressions with occurrence of
  an \behav\  variable, apart from the variable itself, is in forms of
  prefixes or sums.
\end{itemize}
\end{definition}

% In other words,  
% if the system is sequential, then 
%  for
% every expression $E_i$, any subexpression of $E_i$ in which $X_j $ 
% appears, apart from $X_j$ itself,  is a sum (of prefixed terms). 
% For instance, 
% \begin{itemize}
% \item $X = \tau. X + \mu . \nil$ is sequential but not 
%  guarded, because the guarding prefix for the variable
% is not visible.

% \item $X =  \ell . X | P$ is  guarded but not sequential.

% \item $X =  \ell . X + \tau. \res a (a .\outC b | a.\nil)$, as well as 
% $X = \tau . (a. X + \tau . b .X + \tau  )$
% are both 
%  guarded and sequential.
% \end{itemize} 

\begin{theorem}[unique solution of equations, \cite{Mil89}]
\label{t:Mil89}
A system of guarded and sequential equations (without direct sums)
$\{  X_i = E_i\}_{i\in I}$ has a unique solution for $\wb$.
\end{theorem}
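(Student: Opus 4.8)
The plan is to fix two solutions $\til P$ and $\til Q$, so that $P_i \wb E_i[\til P]$ and $Q_i \wb E_i[\til Q]$ for every $i$, and to prove $P_i \wb Q_i$ by exhibiting a weak bisimulation. The candidate relation I would use builds $\wb$ into its definition, to avoid any unsound appeal to ``bisimulation up to $\wb$'':
\[
\R \;\DSdefi\; \{(P, Q) \st P \wb F[\til P] \text{ and } Q \wb F[\til Q] \text{ for some sequential } F\}.
\]
Taking $F = X_i$ shows $(P_i, Q_i) \in \R$, so it suffices to prove that $\R$ is a weak bisimulation. The technical core is a decomposition lemma for \emph{guarded} sequential expressions, proved by induction on their structure: if $F$ is guarded and sequential and $F[\til P] \arr\mu P'$, then there is a sequential $F'$ with $P' = F'[\til P]$ and $F[\til R] \arr\mu F'[\til R]$ for every tuple $\til R$. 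Intuitively, a variable can be reached only after a visible prefix (guardedness), through a context built solely from prefixes and sums (sequentiality), so the first action of $F[\til P]$ is produced by $F$ itself and is insensitive to the instantiation.

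To verify the bisimulation clause, take $(P, Q) \in \R$, witnessed by a sequential $F$, and a move $P \arr\mu P'$. Since $P \wb F[\til P]$, we get a weak answer $F[\til P] \Arcap\mu P''$ with $P' \wb P''$, and it then suffices to match $F[\til P] \Arcap\mu P''$ by some $F[\til Q] \Arcap\mu Q''$ with $(P'', Q'') \in \R$; transferring back along $Q \wb F[\til Q]$ and $P' \wb P''$ closes the case. So the heart of the argument is to track a weak computation of $F[\til P]$ and replay it on $F[\til Q]$, keeping the residuals related by $\R$. The guarded steps are handled directly by the decomposition lemma, which produces a common residual expression; the real difficulty is that unfolding a guarded expression \emph{exposes} its variables, so subsequent steps may be fired by the instantiated processes $P_i$, whose behaviour is arbitrary. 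Here the solution property re-enters: an exposed $X_i$ may be replaced by $E_i$ while staying within $\wb$, since $P_i \wb E_i[\til P]$, and $E_i$ is guarded, so the decomposition lemma applies once more after this \emph{re-guarding}.

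The main obstacle is ensuring that this re-guarding terminates and does not smuggle in an unsound up-to-$\wb$ step on the silent transitions. What rescues the argument is exactly guardedness: reaching a variable always costs a \textbf{visible} action, so each re-guarding is preceded by genuine observable progress and the replay cannot diverge by silently inserting $\tau$-transitions. To make this rigorous I would equip the analysis with an explicit measure — for instance the number of exposed variable occurrences that can act before the next visible step — and argue by well-founded induction on it, or equivalently use a stratified family of matches indexed by transition depth. Once the replay is established, $\R$ satisfies both weak bisimulation clauses (the second being symmetric), and instantiating $F = X_i$ gives $P_i \wb Q_i$ for all $i$, i.e. $\til P \wb \til Q$. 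Finally, I would locate the use of the ``no direct sums'' hypothesis inside the decomposition lemma, where it prevents a top-level sum from letting an unguarded variable contribute its own transition while another summand is chosen; this is precisely the substitutivity obstruction that the rooted-contraction development of Section~\ref{ss:new} is designed to lift.
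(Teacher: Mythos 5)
You should first note a mismatch of expectations: the paper never proves Theorem~\ref{t:Mil89} at all — it quotes it from \cite{Mil89}, and the only formalised variant (\texttt{OBS\_UNIQUE\_SOLUTION}) is for rooted bisimilarity — so your attempt can only be compared against Milner's classical argument and against the paper's own proofs of the contraction analogues, Lemma~\ref{l:uptocon} and Theorem~\ref{t:contraBisimulationU}. Judged that way, your proposal is correct, and it is essentially Milner's proof repackaged in the style the paper uses for contractions: where Milner establishes that $\{(H[\til P],H[\til Q]) \st \text{$H$ sequential}\}$ is a bisimulation \emph{up to} $\wb$ (sound because challenges there are weak transitions), you absorb $\wb$ into the candidate relation on both sides, mirroring the relation (\ref{eq:R}); the two packagings are interchangeable, and yours is the safer one given the well-known unsoundness of naive up-to-$\wb$. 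The genuinely different ingredient, which you identify correctly, is the replay lemma: the paper proves Lemma~\ref{l:uptocon} by unfolding the contractions $n$ times and exploiting the efficiency constraint built into $\mcontrBIS$ (which is why weak guardedness suffices there), whereas for plain $\wb$ no such bound on the number of steps exists, and your replay instead exploits sequentiality plus the fact that exposing a variable costs a visible action; since a weak transition $\Arr{\mu}$ contains at most one visible action, the re-guarding step $P_i \wb E_i[\til P]$ cannot be iterated unboundedly — this is exactly Milner's mechanism and the precise reason this theorem needs strong rather than weak guardedness. Two refinements you would need in a full write-up: the decomposition lemma must be stated for \emph{weakly} guarded sequential expressions, not just guarded ones, because after the visible action the residual may be, say, $\tau.X$, and its $\tau$-step must still be read off the expression rather than off $P_i$; and the no-direct-sums hypothesis is not really used inside the decomposition lemma (for guarded $F$, direct sums are harmless there), but in the treatment of residuals: it forces an exposed variable to be the \emph{whole} residual expression, so that re-guarding never has to be performed inside a sum context, where the failure of sum-substitutivity of $\wb$ would block the step — the obstruction you rightly connect to the rooted-contraction development of Section~\ref{ss:new}.
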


% The proof exploits an invariance property on immediate transitions for
% guarded and sequential expressions, and then extracts a bisimulation
% (up-to bisimilarity) out
% of the solutions of the system.  
To see the  need of the sequentiality  condition, consider
 the equation (from \cite{Mil89}) $X = \res a (a. X | \outC a)$
where $X$ is guarded but not sequential. Any \hl{process that} does not use $a$ is a solution.

\subsection{Contractions}
\label{s:mcontr}

The constraints on the unique-solution Theorem~\ref{t:Mil89} can be 
weakened if we move from equations to a special kind of inequations called
  \emph{contractions}.

Intuitively, the bisimilarity contraction $\mcontrBIS$ is a preorder in which 
$P \mcontrBIS Q  $  holds  if $P \wb Q$ and, in addition, 
$Q$ has the \emph{possibility} of being at least as efficient as $P$ (as far as
$\tau$-actions performed). 
Process $Q$, however, may be nondeterministic and may have other ways
of doing the same work, and these could be  slow (i.e., involving
more $\tau$-steps than those performed by $P$).
% Thus, in contrast with expansion,  we cannot really say that `$Q$ is more efficient than
% $P$'.

\begin{definition}%[bisimulation contraction]
\label{d:BisCon}
A process relation ${\R}$ 
 is a \textbf{(bisimulation) contraction} if, whenever
 $P\RR Q$, %for all $\mu$ 

\begin{enumerate}
\item   $P \arr\mu P'$ implies there is $Q'$ such that $Q \arcap \mu
  Q'$ and $P' \RR Q'$;
\item $Q \arr\mu Q'$   implies there is $P'$ such that $P \Arcap \mu
 P'$ and $P' \wb Q'$\enspace.
\end{enumerate}
\textbf{Bisimilarity contraction}, written as $P \mcontrBIS Q$ ($P$
\textbf{contracts to} $Q$), if $P\ \R\ Q$ for some contraction $\R$.
\end{definition}
% We say that $P$  {\emph{bisimilarly  contracts} $Q$, written
% $Q \mcontrBIS Q$.
% We write $\mexpaBIS$ for the inverse of $\mcontrBIS$.

In the first clause $Q$ is required to match $P$'s challenge
transition with at most one transition.
This makes sure that $Q$ is capable of mimicking $P$'s
work at least as efficiently as $P$. 
In contrast, the second clause of Definition~\ref{d:BisCon}, on the
challenges from $Q$, entirely ignores efficiency: it is the same
clause of  weak bisimulation~--- the final derivatives are even required
to be related  by $\wb$, rather than by $\R$.

Bisimilarity  contraction is coarser than 
 the \emph{expansion relation} 
$\expa$ \cite{arun1992efficiency,sangiorgi2015equations}.
This is a
preorder widely used in proof techniques for bisimilarity and that 
intuitively refines bisimilarity by 
 formalising the idea of `efficiency' between processes.
Clause (1) is the same in the two
preorders. But in clause (2) expansion uses 
$P \Arr \mu P'$, rather than $P \Arcap \mu P'$; 
 moreover with
contraction the final derivatives are simply required to be bisimilar.
An expansion 
$P \expa Q$
tells us  that $Q$ is always at least as efficient as $P$, whereas  the
 contraction $P \mcontrBIS Q$  just says that $Q$ has the  possibility of
being at least as efficient as $P$. 

% \iflong\newDS
% Note the use of $\wb$, in place of $\R$, in the second clause:
%  we are only interested to know that $Q$ may  mimic $P$'s
% more in an efficient manner (first clause); we do not care about
% efficiency
% on the
% challenges proposed by $Q$,   which are handled using ordinary
% bisimilarity.
% \fi

\begin{example}
\label{exa:contr}
We have %\mcontrBIS a + \tau^n . a $
 $ a \not  \mcontrBIS \tau. a$. However,
$a+ \tau . a \mcontrBIS a$, as well as its converse, 
$  a \mcontrBIS a +
\tau . a $. Indeed, if $P \wb Q$ then 
$  P  \mcontrBIS P +Q$. The last two relations do not hold with 
$\expa$, which explains the strictness of the inclusion
 ${\expa} \subset {\mcontrBIS}$. 
% The inclusion is strict: for instance
% $a+ \tau . a \mcontrBIS a$, where $\mcontrBIS$ cannot be replaced by
%  $\contr$. Also the converse of  $a+ \tau . a \mcontrBIS a$ holds, namely
% $  a \mcontrBIS a +
% \tau . a $. However, we have %\mcontrBIS a + \tau^n . a $
%  $ a \not  \mcontrBIS \tau. a$
\end{example} 

% The substitutivity proofs for for expansion and bisimilarity carry
% over to contraction,   

\hl{Like} (weak) bisimilarity and expansion, contraction is
preserved by all operators but (direct) sum.

\subsection{Systems of contractions}
\label{ss:SysContr}

A \emph{system of contractions} is defined as a system of equations,
except that the contraction symbol $\mcontr$ is used in the place of
the equality symbol $=$. Thus a system of contractions is a set 
$\{  X_i \mcontr E_i\}_{i\in I}$
where $I$ is an  indexing set and expressions
$E_i$  may contain the  \behavC\  variables 
$\{  X_i\}_{i\in I}$.

\begin{definition}
\label{d:uniContra}
Given a system of contractions 
$\{  X_i \mcontr E_i\}_{i\in I}$, 
 we say that:
\begin{itemize}
\item $\til P$ is a \emph{solution (for $\mcontrBIS$) of the 
 system of contractions} if $\til P \mcontrBIS \til E [\til P]$;
\item the system has \emph{a unique solution (for $\approx$)}
if $\til P \approx \til Q$ whenever $\til P$ and $\til Q$ are both solutions.
\end{itemize}
\end{definition}

The guardedness of contractions follows Def.~\ref{def:guardness} (for equations).
% \begin{definition}
% \label{d:guarded}
% A system of contractions $\{  X_i \mcontr E_i\}_{i\in I}$
%  is
% \emph{weakly guarded}
% if,  in each    $E_i$, each occurrence of
% a \behavC\ variable is underneath a prefix.

% The system use \emph{weakly-guarded sums} if 
% each $E_i$ only makes use of guarded sums.
% \end{definition}

\begin{lemma}
\label{l:uptocon}
Suppose $\til P$ and $\til Q$ are solutions  for $\mcontrBIS$
 of a system of weakly-guarded contractions that uses 
weakly-guarded sums.
For any context $\qct$  that uses 
weakly-guarded sums,
if  $\ct{\til P}\Arr{\mu}  R$,
 then 
there is a context $\qctp$  that uses 
weakly-guarded sums
such that $R \mcontrBIS \ctp{\til P}$ and $\ct{\til Q} \Arcap{\mu}
 \wb \ctp{\til Q}$.\footnote{There's no typo here: $\ct{\til Q} \Arcap{\mu} \wb \ctp{\til
     Q}$ means $\exists {\til R}.\; \ct{\til Q} \Arcap{\mu} {\til R}
   \wb \ctp{\til Q}$. Same as in Lemma~\ref{l:ruptocon}.}
\end{lemma}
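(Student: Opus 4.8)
The plan is to reduce the statement to a single-step analysis and then induct on the length of the weak transition $\ct{\til P} \Arr\mu R$. Throughout I will use two facts that are available for contexts built only from weakly-guarded sums, namely that $\mcontrBIS$ is a precongruence and that $\wb$ is a congruence for such contexts (weak bisimilarity and contraction fail to be preserved only by \emph{direct} sums). I will also use the solution hypotheses $P_i \mcontrBIS E_i[\til P]$ and $Q_i \mcontrBIS E_i[\til Q]$ for every $i$, together with the efficiency of clause~(1) of $\mcontrBIS$ (each challenge step is answered by at most one step).

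The key sub-lemma is the one-step version: if $\ct{\til P} \arr\alpha R_1$ with $\qct$ a weakly-guarded-sum context, then there is a weakly-guarded-sum context $\qcti 1$ with $R_1 \mcontrBIS \cti 1 {\til P}$ and $\ct{\til Q} \Arcap\alpha \wb \cti 1 {\til Q}$. To prove it I distinguish how the transition is generated. (i)~If the action comes purely from the operators of $\qct$ (a context prefix firing, or a synchronisation between two context prefixes), then $\qct \arr\alpha \qctpp$ as contexts and $R_1 = \ctpp{\til P}$; taking $\qcti 1 = \qctpp$, reflexivity of $\mcontrBIS$ and $\ct{\til Q} \arr\alpha \ctpp{\til Q}$ close this case. (ii)~If an active hole $i$ fires, i.e.\ $P_i \arr\beta P_i'$ contributes the action, I apply clause~(1) of $P_i \mcontrBIS E_i[\til P]$ to get $E_i[\til P] \arcap\beta G$ with $P_i' \mcontrBIS G$; because the \emph{system} is weakly guarded, no hole of $E_i$ is active, so this is a context transition $E_i \arr\beta \qctD$ with $G = \ctD{\til P}$ (and $\qctD$ again uses weakly-guarded sums). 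Plugging $\ctD{\til P}$ for $P_i'$ into the surrounding weakly-guarded-sum context and using precongruence of $\mcontrBIS$ yields $R_1 \mcontrBIS \cti 1 {\til P}$. For the $\til Q$-side I feed the context transition $E_i[\til Q] \arr\beta \ctD{\til Q}$ into clause~(2) of $Q_i \mcontrBIS E_i[\til Q]$, obtaining $Q_i \Arcap\beta Q_i'$ with $Q_i' \wb \ctD{\til Q}$; congruence of $\wb$ then lifts this to $\ct{\til Q} \Arcap\alpha \wb \cti 1 {\til Q}$ (relabelling turns $\beta$ into $\alpha$ consistently on both sides). (iii)~A synchronisation that uses one or two holes is handled by combining the previous two analyses on each participating hole.

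With the sub-lemma in hand I induct on the number $n$ of strong steps in $\ct{\til P} \Arr\mu R$. The case $n = 0$ (so $\mu = \tau$ and $R = \ct{\til P}$) is immediate with $\qctp = \qct$. Otherwise I split $\ct{\til P} \arr\alpha R_1 \Arr{\mu'} R$, where $\alpha,\mu'$ recombine to $\mu$, and apply the sub-lemma to the first step, giving $\qcti 1$ with $R_1 \mcontrBIS \cti 1 {\til P}$ and $\ct{\til Q} \Arcap\alpha T_1 \wb \cti 1 {\til Q}$. The crucial point is that, since clause~(1) answers each challenge step by at most one step, the $(n-1)$-step transition $R_1 \Arr{\mu'} R$ transfers to a transition $\cti 1 {\til P} \Arcap{\mu'} S$ of \emph{no more} than $n-1$ steps with $R \mcontrBIS S$. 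The induction hypothesis on this shorter transition produces $\qctp$ with $S \mcontrBIS \ctp{\til P}$ and $\cti 1 {\til Q} \Arcap{\mu'} T_2 \wb \ctp{\til Q}$. Transitivity of $\mcontrBIS$ gives $R \mcontrBIS \ctp{\til P}$; and bouncing $\cti 1 {\til Q} \Arcap{\mu'} T_2$ off $T_1 \wb \cti 1 {\til Q}$ (weak bisimilarity answers weak challenges) yields $T_1 \Arcap{\mu'} T_2'$ with $T_2' \wb \ctp{\til Q}$, so that $\ct{\til Q} \Arcap\alpha T_1 \Arcap{\mu'} T_2'$, i.e.\ $\ct{\til Q} \Arcap\mu T_2' \wb \ctp{\til Q}$, as required.

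The main obstacle is the single-step sub-lemma, and within it the case analysis of synchronisations involving holes, together with the bookkeeping needed to keep $\qcti 1$ (and hence $\qctp$) a genuine weakly-guarded-sum context after composing the residual contexts $\qctD$ into the surrounding context. The one structural fact that must be isolated and proved carefully — because it is exactly where weak guardedness of the system is used — is that a transition of $E_i[\til P]$ can never fire a hole and is therefore a context transition matched identically by $E_i[\til Q]$; everything else is an application of the contraction clauses and of the precongruence of $\mcontrBIS$ and congruence of $\wb$ under weakly-guarded-sum contexts.
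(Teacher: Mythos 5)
Your proposal is correct, but it takes a genuinely different route from the paper's proof. The paper performs a single \emph{global} unfolding: writing $n$ for the number of strong steps in $\ct{\til P}\Arr\mu R$, it unfolds the contractions of the system $n$ times to obtain a context $\qctpp$ (still with weakly-guarded sums) whose holes all lie under at least $n$ prefixes, hence are inert throughout the matching transition; substitutivity of $\mcontrBIS$ gives $\ct{\til P} \mcontrBIS \ctpp{\til P}$ and $\ct{\til Q} \mcontrBIS \ctpp{\til Q}$, iterating clause~(1) along the given transition yields $\ctpp{\til P}\Arcap{\mu} \ctp{\til P} \mexpaBIS R$ in at most $n$ steps which are therefore pure context transitions (so also $\ctpp{\til Q}\Arcap{\mu}\ctp{\til Q}$), and a single application of clause~(2) to $\ct{\til Q} \mcontrBIS \ctpp{\til Q}$ finishes. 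You instead unfold \emph{lazily} --- one contraction, for the one hole that fires, per step --- via a one-step sub-lemma proved by case analysis on the SOS derivation, and then stitch the steps together by strong induction on $n$, using the efficiency of clause~(1) to make the measure decrease, transitivity of $\mcontrBIS$, and the bisimulation property of $\wb$ to recombine the $\til Q$-side transitions. Both arguments rest on the same three ingredients (precongruence of $\mcontrBIS$ and congruence of $\wb$ for weakly-guarded-sum contexts; the ``at most one step per challenge step'' clause; uniformity of transitions of the weakly guarded bodies $E_i$), but they distribute the work differently: the paper's unfolding trick eliminates every case analysis on how a transition is derived, at the price of the $n$-fold unfolding lemma (which is exactly what the formalisation proves as \texttt{unfolding\_lemma4}), whereas your route avoids that construction at the price of a multi-case rule induction (context action, hole action, synchronisation involving one or two holes, under restriction and relabeling) plus the bookkeeping that residual contexts keep weakly-guarded sums --- which you rightly flag as the delicate part. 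One small wrinkle to patch in your case~(ii): when clause~(1) answers a $\tau$-challenge of the hole with \emph{zero} steps, i.e.\ $G = E_i[\til P]$, there is no transition of $E_i[\til Q]$ to feed into clause~(2); in that sub-case take $\qctD = E_i$, use that $\mcontrBIS$ is included in $\wb$ to get $Q_i \wb E_i[\til Q]$, and conclude with congruence of $\wb$ and the empty weak transition on the $\til Q$ side.
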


\begin{proof}{(sketch from \cite{sangiorgi2017equations})}
Let $n$ be the length of the transition $\ct{\til P}\Arr\mu R$  (the
number of `strong steps' of which it is composed), and  
let $\ctpp {\til P}$ and $\ctpp {\til Q}$  be the processes obtained
from  $\ct {\til P}$ and $\ct {\til Q}$ by unfolding the definitions
of the contractions $n$ times. Thus in $\qctpp$ each hole is
underneath at least $n$ prefixes, and cannot contribute to an action
in the first $n$ transitions; moreover all the contexts have only
weakly-guarded sums.

We have $\ct{\til P} \mcontrBIS \ctpp{\til P}$, and 
$\ct{\til Q} \mcontrBIS \ctpp{\til Q}$, 
 by the substitutivity  properties of $\mcontrBIS$ (we exploit here
 the syntactic constraints on sums). Moreover,
 since each hole of the  context $\qctpp$ is underneath at least $n$
 prefixes, applying  
the definition
 of $ \mcontrBIS$ on the transition 
 $\ct{\til P}\Arr{\mu}  R$, we infer the existence
 of $\qctp$ such that 
$
\ctpp{\til P}\Arcap{\mu} \ctp{\til P} \mexpaBIS R
$
and 
$
\ctpp{\til Q}\Arcap{\mu}  \ctp{\til Q} 
. $
Finally, again applying the definition of $\mcontrBIS$ on 
$\ct{\til Q} \mcontrBIS \ctpp{\til Q}$, 
we derive 
$
\ct{\til Q}\Arcap{\mu}  \wb \ctp{\til Q} 
.$
\end{proof}

\begin{theorem}[unique solution of contractions for $\wb$]
\label{t:contraBisimulationU}
A system of weakly-guarded contractions
having only weakly-guarded sums, has a unique solution for $\wb$.
\end{theorem}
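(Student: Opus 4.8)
The plan is to exhibit a single relation containing all the pairs $(P_i,Q_i)$ and to prove it is contained in $\wb$. Let $\til P$ and $\til Q$ be two arbitrary solutions for $\mcontrBIS$ of the system, and set
$$\R \;=\; \{(\ct{\til P},\ct{\til Q}) \mid \qct \text{ is a context using only weakly-guarded sums}\}.$$
Instantiating $\qct$ with the trivial one-hole context $\holei i$ (which contains no sums at all) gives $P_i \RR Q_i$ for every $i$, so as soon as I establish $\R \subseteq \wb$ I obtain $\til P \wb \til Q$, which is exactly uniqueness for $\wb$. The whole theorem therefore reduces to the single containment $\R\subseteq\wb$.

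To attack that containment I would run the bisimulation game on a pair $(\ct{\til P},\ct{\til Q})$ and answer every challenge with Lemma~\ref{l:uptocon}. A left challenge $\ct{\til P}\arr\mu R$ is in particular a weak transition $\ct{\til P}\Arr\mu R$ (as $\Arr\epsilon$ is reflexive), so the lemma supplies a weakly-guarded-sum context $\qctp$ with $R \mcontrBIS \ctp{\til P}$ and, for some $S'$, $\ct{\til Q}\Arcap\mu S'\wb\ctp{\til Q}$; the matching move is $\ct{\til Q}\Arcap\mu S'$, and the derivatives satisfy $R \mcontrBIS \ctp{\til P} \RR \ctp{\til Q} \wb S'$. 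Since $\til P$ and $\til Q$ are solutions of the \emph{same} system, the lemma applies verbatim with their roles exchanged, so a right challenge $\ct{\til Q}\arr\mu S$ is answered symmetrically by $\ct{\til P}\Arcap\mu R'\wb\ctp{\til P}$ together with $S \mcontrBIS \ctp{\til Q}$, yielding derivatives $R'\wb\ctp{\til P} \RR \ctp{\til Q}\mexpaBIS S$. Thus $\R$ is a weak bisimulation \emph{up to} $\wb$ and $\mcontrBIS$, with the contraction always landing on the derivative of the challenger.

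The delicate point—and the step I expect to be the main obstacle—is the \emph{soundness} of this enhancement, i.e.\ the inference from ``$\R$ is a bisimulation up to $\wb$ and $\mcontrBIS$'' to ``$\R\subseteq\wb$''. One may not appeal to the plain ``up to $\wb$'' technique, which is unsound for weak bisimilarity: allowing $\wb$ freely on both sides of the derivatives would let one side silently insert $\tau$-steps and collapse the game. What rescues the argument is precisely the contraction $R \mcontrBIS \ctp{\til P}$ that the lemma places on the challenger's derivative: it guarantees that $\ctp{\til P}$ can reproduce $R$'s behaviour at least as efficiently, so the efficiency measure (the number of $\tau$-steps accumulated on a single side) cannot grow unboundedly as the game is iterated. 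I would therefore isolate, as a separate lemma, the soundness of this polarised ``up to contraction and bisimilarity'' technique—the genuinely non-routine ingredient, and the part whose HOL formalisation demands the most care—after which the theorem follows by applying it to $\R$.
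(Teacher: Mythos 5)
Your diagnosis of the crux is accurate, and your route is viable in principle, but it genuinely differs from the paper's --- and the difference sits exactly at the step you leave open. You take the bare context closure $\R$ and argue it is a bisimulation \emph{up to} $\mcontrBIS$ (on the challenger's derivative) and $\wb$ (on the responder's), deferring the soundness of that enhancement to an unproven lemma. The paper never invokes any up-to meta-theorem: it builds the $\wb$-saturation into the candidate relation from the start, taking (see (\ref{eq:R})) the set of pairs $(R,S)$ with $R \wb \ct{\til P}$ and $S \wb \ct{\til Q}$ for some $\qct$, and shows that this larger relation is a \emph{plain} bisimulation. With saturation, a challenge $R \arr\mu R'$ is first pushed through $R \wb \ct{\til P}$ to obtain a \emph{weak} transition $\ct{\til P} \Arcap{\mu} R'' \wb R'$ (this is why Lemma~\ref{l:uptocon} is stated for $\Arr{\mu}$ rather than single steps, a generality your version never uses); the lemma then gives $R'' \mcontrBIS \ctp{\til P}$ and $\ct{\til Q} \Arcap{\mu} \wb\, \ctp{\til Q}$, and since $\mcontrBIS$ is included in $\wb$, the derivative pair satisfies $R' \wb \ctp{\til P}$ and $S' \wb \ctp{\til Q}$, hence lies in the relation again \emph{by construction}. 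The contraction supplied by the lemma is simply absorbed into $\wb$; no enhancement of the bisimulation proof method is ever needed.

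The genuine gap is that your deferred soundness lemma \emph{is} the theorem: all the mathematical content sits there, and ``the efficiency measure cannot grow unboundedly'' is an intuition, not a proof. It is also less routine than ``isolate and apply'': the natural candidate closures $\mcontrBIS \R\, \wb$ and $\wb \R\, \mexpaBIS$ each answer challenges from one side only, and chasing a challenge from the other side lands the derivatives in $\wb \R\, \wb$, not back in either of them; the only relation that closes under challenges from \emph{both} sides is precisely the saturation $\wb \R\, \wb$, i.e.\ the paper's relation. Proving that it closes is exactly the paper's argument, with the iteration along weak transitions enabled by clause (1) of Definition~\ref{d:BisCon}: when $P' \mcontrBIS \ctp{\til P}$ and $P'$ fires a step, $\ctp{\til P}$ answers with \emph{at most one} step and the derivatives are again $\mcontrBIS$-related (not merely $\wb$-related), so the single-step hypothesis can be re-applied; this is what blocks the classic $\tau.a$ versus $\nil$ counterexample to up-to-$\wb$, as you correctly sensed. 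So either prove your lemma this way --- at which point you have reproduced the paper's proof with an extra layer of packaging --- or, more economically, adopt the saturated relation from the outset.
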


\begin{proof}{(sketch from \cite{sangiorgi2017equations})}
Suppose $\til P$ and $\til Q$ are two such solutions (for $\wb$) and consider
the relation
\begin{equation}
\label{eq:R}
\R \DSdefi \{ 
(R,S) \st R \wb \ct{\til P}, S \wb \ct{\til Q} \mbox{ for some context
$\qct$ (weakly-guarded sum only)} \} \enspace.
\end{equation}
We show that $\R$ is a bisimulation. \hl{Suppose $R\ \R\ S$ vis the context
$C$}, and $R \arr{\mu} R'$. We have to find $S'$ with $S \Arcap{\mu}
S'$ and $R'\ \R\ S'$. From $R \wb C[{\til P}]$, we derive $C[{\til P}]
\Arcap{\mu} R'' \wb R'$ for some $R''$. By Lemma~\ref{l:uptocon},
there is $C'$ with $R'' \mcontrBIS C'[{\til P}]$ and $C[{\til Q}]
\Arcap{\mu} \wb C'[{\til Q}]$. Hence, by definition of $\wb$, there is
also $S'$ with $S \Arcap{\mu} S' \wb C'[{\til Q}]$. This closes the
proof, as we have $R' \wb C'[{\til P}]$ and $S' \wb C'[{\til Q}]$.
\end{proof}

\subsection{Rooted contraction}
\label{ss:new}

The unique solution theorem of Section~\ref{ss:SysContr} requires a
constrained syntax for sums, due to the congruence and precongruence
problems of bisimilarity and contraction with such operator. 
We show here that the constraints can be
removed by moving to the induced congruence and precongruence, the
latter called \emph{rooted contraction}:
\begin{definition}
\label{d:rcontra}
Two processes $P$ and $Q$ are in \emph{rooted contraction}, written as
 $P\rcontr Q$, if
\begin{enumerate}
\item $P \arr\mu P'$ implies that there is $Q'$ with $Q \arr \mu Q'$
 and $P'\mcontrBIS Q'$;
\item $Q \arr\mu Q'$   implies that there is $P'$ with $P \Arr \mu
 P'$ and $P' \wb Q'$\enspace.
\end{enumerate}
\end{definition}

%Above definition adapts the definition of rooted
%bisimilarity on top of that of the  contraction preorder
%$\mcontrBIS$.  %% Reviewer said this sentance is unclear. I too think so.

\hl{The precise formulation of  this definition was guided by the HOL theorem
  prover and
the following two principles:} (1) the definition should not be recursive,
along the lines of rooted bisimilarity
$\rapprox$ in Def.~\ref{d:rootedBisimilarity};
(2) the definition should  be built on top of existing \emph{contraction}
relation $\mcontrBIS$ (because of its completeness). 
\hl{A few other candidates were quickly tested and rejected, e.g.,
  because of  precongruence issue.} The proof of the precongruence
 result below is along the lines of the analogous result
for rooted bisimilarity with respect to bisimilarity.

\begin{theorem}
\label{t:rcontrPrecongruence}
$\rcontr$ is a precongruence in CCS, and it is the
coarsest precongruence contained in $\contr$.
\end{theorem}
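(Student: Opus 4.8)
The plan is to prove the two halves separately, mirroring the structure of the analogous Theorem~\ref{t:rapproxCongruence} for $\rapprox$ and $\wb$. For the first half I must show that $\rcontr$ is a preorder that is substitutive under every CCS operator. For the second half I must show (i)~$\rcontr \subseteq \contr$, and (ii)~every precongruence $\R$ with $\R \subseteq \contr$ already satisfies $\R \subseteq \rcontr$; together with the first half this exhibits $\rcontr$ as the largest, i.e.\ coarsest, such precongruence. Throughout I will use the preorder laws of $\mcontrBIS$ and $\wb$, the inclusions $\rcontr \subseteq \mcontrBIS \subseteq \wb$ (the first established as part of (i)), and the fact, from Definition~\ref{d:BisCon}, that $\mcontrBIS$ and $\wb$ are themselves congruences for every operator except the sum.

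For the preorder part, reflexivity is immediate and only transitivity needs care. Suppose $P \rcontr Q \rcontr R$. Clause~(1) composes directly: a strong step $P \arr\mu P'$ is matched by $Q \arr\mu Q'$ with $P' \mcontrBIS Q'$ and then by $R \arr\mu R'$ with $Q' \mcontrBIS R'$, and transitivity of $\mcontrBIS$ gives $P' \mcontrBIS R'$. Clause~(2) is subtler, since $P \rcontr Q$ answers only \emph{strong} challenges from $Q$: given $R \arr\mu R'$, I obtain $Q \Arr\mu Q'$ with $Q' \wb R'$, split this weak transition to expose a strong first step $Q \arr\nu Q_0$ (with $\nu = \tau$ or $\nu = \mu$), feed it to clause~(2) of $P \rcontr Q$ to get a genuine $P \Arr\nu P_0$ with $P_0 \wb Q_0$, and finish the remaining steps by ordinary weak bisimilarity of $P_0$ and $Q_0$, composing to $P \Arr\mu P'$ with $P' \wb R'$. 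Substitutivity is then checked operator by operator: prefix, restriction and relabelling are direct transition analyses; parallel composition relates derivatives of the form $P' \mid S'$ and closes using congruence of $\mcontrBIS$ and $\wb$ for $\mid$; recursion is handled by the usual fixed-point / up-to argument transporting the relation through unfoldings of $\recu A P$. The decisive case is the sum, which is exactly what rooting repairs: a challenge of $P + S$ either comes from $S$ (answered by the same $S$-move and reflexivity) or from $P$ via $P \arr\mu P'$, and in the latter case clause~(1) of $P \rcontr Q$ supplies a \emph{single} step $Q \arr\mu Q'$, which by the sum rule lifts to $Q + S \arr\mu Q'$ with $P' \mcontrBIS Q'$; because this match is strong it cannot be absorbed into $\tau$-steps that re-resolve the choice, so $P + S \rcontr Q + S$, and clause~(2) is symmetric.

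For (i) I verify that ${\rcontr} \cup {\mcontrBIS}$ is a contraction in the sense of Definition~\ref{d:BisCon}: pairs in $\mcontrBIS$ are covered by definition, while for $P \rcontr Q$ clause~(1) yields a strong step $Q \arr\mu Q'$ (hence $Q \arcap\mu Q'$) with $P' \mcontrBIS Q'$, and clause~(2) yields $P \Arr\mu P'$ (hence $P \Arcap\mu P'$) with $P' \wb Q'$; thus $\rcontr \subseteq \mcontrBIS = {\contr}$. For (ii) let $\R$ be a precongruence with $\R \subseteq \contr$ and suppose $P \R Q$. Pick a name $w$ fresh for $P$ and $Q$ and use the distinguishing context $[\cdot] + w.\nil$. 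Precongruence gives $P + w.\nil\, \R\, Q + w.\nil$, hence $P + w.\nil \mcontrBIS Q + w.\nil$. To check clause~(1) of $\rcontr$, take $P \arr\mu P'$, so $P + w.\nil \arr\mu P'$; clause~(1) of $\mcontrBIS$ returns $Q + w.\nil \arcap\mu T$ with $P' \mcontrBIS T$. The empty-move option ($\mu = \tau$, $T = Q + w.\nil$) is excluded, since it forces $P' \wb Q + w.\nil$ and hence $P'$ would have to match $Q + w.\nil \arr w \nil$ by a genuine $\Arr w$, impossible because no derivative of $P$ can perform the fresh $w$. So the match is a single step, and as $\mu \neq w$ it issues from the $Q$-summand, giving $Q \arr\mu T$ with $P' \mcontrBIS T$. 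Clause~(2) is symmetric, applying clause~(2) of $\mcontrBIS$ to the challenge $Q + w.\nil \arr\mu Q'$ and the same freshness argument to upgrade $\Arcap\mu$ to a genuine $\Arr\mu$ routed through $P$. Hence $P \rcontr Q$, i.e.\ $\R \subseteq \rcontr$.

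I expect two sources of difficulty. In the precongruence half the recursion case is the most demanding, as it needs an up-to formulation and relies on substitutivity of $\mcontrBIS$ and $\wb$ through unfolding; in a formalisation this is where most of the bookkeeping accumulates. In the coarsest half the real subtlety is the distinguishing-context argument: it presupposes a name $w$ \emph{not} used by $P$ or $Q$, which can fail when $\mathscr{L}$ is finite or a process exhausts the labels. Exactly as for $\rapprox$ in Theorem~\ref{t:rapproxCongruence}, one must either assume infinitely many names (the hypothesis that no process uses up all labels) or replace the single fresh prefix by van Glabbeek's more careful construction in order to still force the rooted, single-step matching. Guaranteeing that the degenerate match is genuinely excluded — rather than merely routed through the wrong summand — is the main obstacle.
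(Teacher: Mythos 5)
Your proposal is correct and takes essentially the same route as the paper: the precongruence half is argued operator by operator along the lines of the congruence proof for rooted bisimilarity (Theorem~\ref{t:rapproxCongruence}), and the coarsest half is Milner's fresh-prefix distinguishing-context argument, which is precisely what the paper formalises (as \texttt{COARSEST\_PRECONGR\_RL}, with an ``almost identical proof'' to the rooted-bisimilarity case). Even your closing caveat — that the fresh name may not exist, so one must either assume the processes do not use up all labels (the paper's \texttt{free\_action} hypothesis) or fall back on van Glabbeek's construction — mirrors the paper's own account of its proof.
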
  

For a system of rooted contractions, the meaning of 
``solution for $\rcontr$'' and of \emph{a unique solution for $\rapprox$}
is the expected one --- just replace in Def.~\ref{d:uniContra}  the preorder 
$\contr$ with $\rcontr$, and the equivalence 
$\approx$ with $\rapprox$.
For this new relation, the analogous of Lemma~\ref{l:uptocon} and of
Theorem~\ref{t:contraBisimulationU} can now be stated without constraints on the sum
operator.
The schema of the proofs is almost identical, because all 
properties of $\rcontr$ needed in this proof is its precongruence, which is
indeed true on unrestricted contexts including direct sums:
\begin{lemma}
\label{l:ruptocon}
Suppose $\til P$ and $\til Q$ are solutions  for $\rcontr$ 
 of a system of weakly-guarded
contractions.
For any context $\qct$, 
if  $\ct{\til P}\Arr{\mu}  R$,
 then 
there is a  context $\qctp$
such that $R \mcontrBIS \ctp{\til P}$ and  $\ct{\til Q} \Arr{\mu}
 \wb \ctp{\til Q}$.
\end{lemma}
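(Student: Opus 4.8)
The plan is to re-run the proof of Lemma~\ref{l:uptocon} almost verbatim, the only essential change being that every appeal to the substitutivity of $\mcontrBIS$---the step that forced the weakly-guarded-sum restriction there---is replaced by an appeal to the precongruence of $\rcontr$ (Theorem~\ref{t:rcontrPrecongruence}), which holds for \emph{arbitrary} contexts, sums included. This is exactly what lets us drop all hypotheses on sums.

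I would start by letting $n$ be the number of strong steps composing the transition $\ct{\til P}\Arr\mu R$, and forming the contexts $\qctpp$ obtained by unfolding the system of contractions $n$ times inside $\qct$. Since the system is weakly guarded, each unfolding drives every hole under at least one further prefix, so in $\qctpp$ each hole lies under at least $n$ prefixes. The key preparatory step is then to establish $\ct{\til P}\rcontr\ctpp{\til P}$ and $\ct{\til Q}\rcontr\ctpp{\til Q}$: because $\til P$ and $\til Q$ are solutions for $\rcontr$ we have $P_i\rcontr E_i[\til P]$ and $Q_i\rcontr E_i[\til Q]$ for every $i$, and substituting these inside the context $\qct$ by precongruence of $\rcontr$, then iterating $n$ times and using that $\rcontr$ is a preorder (hence transitive), yields the two relations. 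As $\rcontr$ is contained in $\contr = \mcontrBIS$, we also get $\ct{\til P}\mcontrBIS\ctpp{\til P}$ and $\ct{\til Q}\mcontrBIS\ctpp{\til Q}$, and from here on the argument matches Lemma~\ref{l:uptocon}.

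Next I would match the given transition. Applying clause~(1) of $\rcontr$ to the first step of $\ct{\til P}\Arr\mu R$ gives a genuine strong answer out of $\ctpp{\til P}$, and continuing the remaining steps through $\mcontrBIS$ (clause~(1), efficiently) produces a transition $\ctpp{\til P}\Arr\mu \ctp{\til P}$ of length at most $n$ with $R\mcontrBIS\ctp{\til P}$, i.e.\ $\ctp{\til P}\mexpaBIS R$. Because every hole of $\qctpp$ sits under at least $n$ prefixes, this short transition cannot consume any filling process, so it factors through the skeleton of $\qctpp$: its target really does have the form $\ctp{\til P}$ for a residual context $\qctp$, and the \emph{same} residual transition is available for the other filling, giving a genuine $\ctpp{\til Q}\Arr\mu\ctp{\til Q}$. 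Feeding this last transition back through $\ct{\til Q}\rcontr\ctpp{\til Q}$---answering its first step by clause~(2) of $\rcontr$, which delivers a true $\Arr\mu$, and its remaining steps by the weak bisimulation $\wb$---closes the argument with $\ct{\til Q}\Arr\mu\wb\ctp{\til Q}$. It is precisely this rooted clause~(2), genuine where the contraction clause~(2) of Lemma~\ref{l:uptocon} only gave $\Arcap\mu$, that accounts for the strengthened conclusion.

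I expect the main obstacle to be the structural-transition step: making rigorous that a transition of length at most $n$ from $\ctpp{\til P}$, all of whose holes are guarded by at least $n$ prefixes, is determined by the context alone, so that its target is $\ctp{\til P}$ for a residual context $\qctp$ that serves $\til P$ and $\til Q$ uniformly. This prefix-counting bookkeeping is inherited unchanged from Lemma~\ref{l:uptocon} and is the delicate part to formalise; by contrast the genuinely new content---swapping substitutivity of $\mcontrBIS$ for precongruence of $\rcontr$, and reading off the $\Arr\mu$ from rootedness---is light, which is why the two proofs share the same schema.
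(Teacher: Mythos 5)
Your proof is correct and follows exactly the route the paper intends: it re-runs the proof of Lemma~\ref{l:uptocon} verbatim, replacing each use of substitutivity of $\mcontrBIS$ under weakly-guarded-sum contexts by the precongruence of $\rcontr$ (Theorem~\ref{t:rcontrPrecongruence}) on arbitrary contexts, which is what lets all sum constraints be dropped. Your further observation --- that applying clause~(2) of $\rcontr$ to the first strong step of the residual transition is what upgrades the answer from $\Arcap{\mu}$ to the genuine $\Arr{\mu}$ in the conclusion --- is precisely the detail the paper leaves implicit when it says the two proof schemas are almost identical.
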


\begin{theorem}[unique solution of contractions for $\rapprox$]
\label{t:rcontraBisimulationU}
A system of weakly-guarded contractions has a unique solution 
 for $\rapprox$. (thus also for $\wb$)
\end{theorem}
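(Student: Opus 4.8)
The plan is to reuse the argument behind Theorem~\ref{t:contraBisimulationU} essentially verbatim, with Lemma~\ref{l:uptocon} replaced by its constraint-free strengthening Lemma~\ref{l:ruptocon}, and to observe that the step up from $\wb$ to $\rapprox$ is already paid for by the \emph{genuine} (hatless) weak answer $\Arr\mu$ in the conclusion of Lemma~\ref{l:ruptocon}, as opposed to the $\Arcap\mu$ of Lemma~\ref{l:uptocon}. Write $\til P \rcontr \til E[\til P]$ and $\til Q \rcontr \til E[\til Q]$ for the two solutions. Since the hypotheses and both clauses of $\rapprox$ (Def.~\ref{d:rootedBisimilarity}) are symmetric in $\til P$ and $\til Q$, it suffices to establish clause~(1): for each $i$, if $P_i \arr\mu P'$ then there is $Q'$ with $Q_i \Arr\mu Q'$ and $P' \wb Q'$.

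First I would rebuild the weak-bisimilarity engine over \emph{unrestricted} contexts. Put
$$\R \DSdefi \{ (R,S) \st R \wb \ct{\til P},\ S \wb \ct{\til Q}\ \text{for some context } \qct \},$$
and show $\R$ is a bisimulation exactly as in Theorem~\ref{t:contraBisimulationU}: if $R \RR S$ via $\qct$ and $R \arr\mu R'$, obtain $R''$ with $\ct{\til P} \Arcap\mu R''$ and $R' \wb R''$; the degenerate $\mu=\tau$ no-move case is answered by $S$ itself while keeping $\qct$, and otherwise $\ct{\til P} \Arr\mu R''$ feeds Lemma~\ref{l:ruptocon}, yielding $\qctp$ with $R'' \mcontrBIS \ctp{\til P}$ and $\ct{\til Q} \Arr\mu \wb \ctp{\til Q}$; matching this weak move under $S \wb \ct{\til Q}$ gives $S \Arcap\mu S' \wb \ctp{\til Q}$, whence $(R',S') \in \R$ using $\mcontrBIS \subseteq \wb$. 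The only change from Theorem~\ref{t:contraBisimulationU} is that $\qctp$ may now carry arbitrary sums, which is harmless because Lemma~\ref{l:ruptocon} places no constraint on contexts. Hence $\R$ is a bisimulation and $\R \subseteq \wb$.

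For the root step, I would take $\qct$ to be the single hole $\holei{i}$, so that $\ct{\til P} = P_i$ and $\ct{\til Q} = Q_i$ \emph{on the nose}, with no $\wb$-slack. A transition $P_i \arr\mu P'$ is then literally $\ct{\til P} \Arr\mu P'$, and Lemma~\ref{l:ruptocon} returns a context $\qctp$ with $P' \mcontrBIS \ctp{\til P}$ and a \emph{genuine} weak move $Q_i \Arr\mu Q'$ satisfying $Q' \wb \ctp{\til Q}$. From $P' \mcontrBIS \ctp{\til P}$ and $\mcontrBIS \subseteq \wb$ we obtain $P' \wb \ctp{\til P}$; together with $Q' \wb \ctp{\til Q}$ this is exactly $(P',Q') \in \R$, hence $P' \wb Q'$. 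As $Q_i \Arr\mu Q'$ is a real weak transition, clause~(1) of $\rapprox$ holds; clause~(2) is the symmetric argument. Therefore $\til P \rapprox \til Q$, and since $\rapprox \subseteq \wb$ also $\til P \wb \til Q$, which is the parenthetical claim.

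The two delicate points are precisely the ones rooted contraction is meant to handle. First, $\wb$ is not preserved by sums, so one cannot deduce $\ctp{\til P} \wb \ctp{\til Q}$ from $\til P \wb \til Q$ when $\qctp$ contains a choice; this is why the tails must be carried by the relation $\R$, shown to be a bisimulation by direct construction rather than by any congruence appeal, and why the sum-freeness of Lemma~\ref{l:ruptocon} is essential. Second, $\rapprox$ demands a genuine first move $Q_i \Arr\mu Q'$, non-empty even when $\mu=\tau$, whereas the general bisimulation step only yields $\Arcap\mu$ because its answers pass through a $\wb$-matching; the root step succeeds only because at the single-hole context $(P_i,Q_i)$ coincide with $(\ct{\til P},\ct{\til Q})$ without any approximation, so the hatless $\Arr\mu$ of Lemma~\ref{l:ruptocon} is inherited intact. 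I expect this observation, that rootedness is recovered essentially for free at the single-hole context, to be the crux of the argument.
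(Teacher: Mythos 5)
Your proof is correct, and its skeleton is the paper's own: the same relation $\R$ over unrestricted contexts, shown to be a bisimulation exactly as in Theorem~\ref{t:contraBisimulationU} but with Lemma~\ref{l:ruptocon} replacing Lemma~\ref{l:uptocon}, followed by a separate root step that upgrades the conclusion from $\wb$ to $\rapprox$. The divergence is in how the root step is discharged. The paper's proof obtains the genuine first move $Q_i \Arr\mu Q'$ with derivatives related by $\R$ by appealing to Lemma 4.13 of \cite{Mil89} --- the transition lemma for weakly guarded expressions from Milner's \emph{strong} unique-solution theorem (a single strong transition of $E[\til P]$ cannot touch the holes, so it induces the matching transition of $E[\til Q]$) --- combined with the two clauses of Def.~\ref{d:rcontra} applied to the solutions; the paper itself flags this borrowing from the strong theory as ``surprising.'' You instead re-instantiate Lemma~\ref{l:ruptocon} at the trivial single-hole context: there $P_i \arr\mu P'$ is literally $\ct{\til P} \Arr\mu P'$, so the lemma's hatless conclusion $\ct{\til Q} \Arr\mu {\wb}\; \ctp{\til Q}$ hands you the genuine weak move $Q_i \Arr\mu Q'$ and the membership $(P',Q') \in \R$, whence $P' \wb Q'$ by the already-established bisimulation property of $\R$. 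Both routes ultimately rest on the same fact --- clause (2) of rooted contraction yields $\Arr\mu$ rather than $\Arcap\mu$, which is precisely what Lemma~\ref{l:ruptocon} records --- but yours is slightly more economical and uniform, reusing the one lemma already needed for the bisimulation phase and requiring no auxiliary unfolding lemma, whereas the paper's route makes explicit where weak guardedness is consumed in the root step.
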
 

\begin{proof}
We first follow the same steps as in the proof of Theorem~\ref{t:contraBisimulationU} to show the relation $\R$ (now
with $\rcontr$ and unrestricted context $C$) in (\ref{eq:R}) is bisimulation,
exploting Lemma~\ref{l:ruptocon}. \hl{Then it remains to show that,} for
any two process $P$ and $Q$ with action $\mu$, if $P \arr{\mu} P'$ then
there is $Q'$ such that $Q \Arr{\mu} Q'$ (not $Q \Arcap{\mu} Q'$!) and
$P'\ \R\ Q'$, and also for the converse direction, exploting Lemma
4.13 of \cite{Mil89} \hl{(surprisingly)}. \hl{By definition of
\emph{bisimulation} (not $\wb$!) and $\approx^c$, we actually proved $P
\approx^c Q$ instead of $P \wb Q$.}
\end{proof}

% part 2 (Tian)
\section{Formalisation}
\label{s:for}
We highlight here a formalisation of CCS
in the HOL theorem
prover (HOL4) \cite{slind2008brief},
including the new concepts and theorems proposed in the first half of
this paper.
%  The main purpose is to convince the readers that, there's no flaw
%  in the informal proofs. 
The whole formalisation (apart from minor fixes and extensions in this
paper)
is described in \cite{Tian:2017wrba}, and the
proof scripts are in HOL4 official
examples\footnote{\url{https://github.com/HOL-Theorem-Prover/HOL/tree/master/examples/CCS}}. The
current work consists of about 20,000 lines of proof scripts in Standard ML.

Higher Order Logic (or \emph{HOL Logic}) \cite{hollogic}, which traces
its roots back to LCF
\cite{gordon1979edinburgh,milner1972logic} by Robin Milner and others since 1972, is a variant of
Church’s simple theory of types (STT) \cite{church1940formulation},
plus a higher order version of Hilbert's choice operator $\varepsilon$,
Axiom of Infinity, and Rank-1 (prenex) polymorphism.
HOL4 has implemented the original HOL Logic, 
while some other theorem provers in HOL family (e.g. Isabelle/HOL) have
certain extensions.
%  (they made the formal language more powerful,
% but they also bring the possibilities that the entire logic becomes
% inconsistent). 
Indeed the HOL Logic has considerable simpler logical
foundations than most other theorem provers. %, e.g. Coq. 
As a consequence,
formal theories built in HOL is easily convincible and can
also be easily ported to other proof systems,
sometimes automatically \cite{hurd2011opentheory}.

HOL4 is written in Standard ML, a single programming language which
plays three different roles:
%(The situation is quite different in other systems like Coq\footnote{\hl{The Coq Proof Assistant.} \url{https://coq.inria.fr}}.)
\begin{enumerate}
\item It serves as the underlying implementation language for the core HOL engine;\vspace{-1ex}
\item it is used to implement tactics (and tacticals) for writing proofs;\vspace{-1ex}
\item it is used as the command language of the HOL interactive shell.
\end{enumerate}
\hl{Moreover, using} the same language HOL4 users can write complex automatic
verification tools by calling HOL's theorem proving
facilities. \hl{(The formal proofs of theorems in CCS theory
are mostly done by an \emph{interactive process} closely following
their informal proofs, with minimal automatic proof searching.)}

% \finish{below: to be reformulated (by Davide) later} 
% We will not repeat all definitions and theorems of CCS again by their
% formalized versions. Instead, we just focus on several highlights in
% this work, i.e.
% \begin{enumerate}
% \item The use of HOL's new coinductive relation package for defining bisimilarity;
% \item The formalisation of context by $\lambda$-expression and the theory of
%   (pre)congruence for CCS;
% \item The definition and uses of trace in the proof of unique solution of
%   contractions theorem;
% \end{enumerate}\finish{TODO: need updates}

\emph{In this
formalisation we consider only single-variable equations/contractions.}
This considerably
 simplifies the required proofs in HOL, also enhances the readability of
 proof scripts \hl{without loss of generality. (For paper proofs, the
 multi-variable case is just a routine adaptation.)}

\subsection{CCS and its transitions by SOS rules}

In our CCS formalisation, the type ``\HOLinline{\ensuremath{\beta} \HOLTyOp{Label}}'' (\texttt{'b} or
$\beta$ is the type variable for actions) accounts for visible actions, divided into input
and output actions, defined by HOL's Datatype package:
\begin{lstlisting}
val _ = Datatype `Label = name 'b | coname 'b`;
\end{lstlisting}
The type ``\HOLinline{\ensuremath{\beta} \HOLTyOp{Action}}'' is the
union of all visible actions, plus invisible action $\tau$ (now based on
HOL's \texttt{option} theory). The cardinality of
``\HOLinline{\ensuremath{\beta} \HOLTyOp{Action}}'' (and therefore of all
CCS types built on top of it)
 depends on the choice (or \emph{type-instantiation}) of $\beta$.

The type ``\HOLinline{(\ensuremath{\alpha}, \ensuremath{\beta}) \HOLTyOp{CCS}}'', accounting for the CCS
syntax\footnote{\hl{The order of type variables $\alpha$ and $\beta$
    is irrelevant. Our choice is aligned with other CCS literals.
$\mathrm{CCS}(h,k)$ is the CCS subcalculus which can use at most $h$ constants
and $k$ actions.} \cite{gorrieri2017ccs} \hl{Thus, to formalize theorems on
such a CCS subcalculus, the needed CCS type can be retrieved by instantiating the type
variables $\alpha$ and $\beta$ in ``\HOLinline{(\ensuremath{\alpha}, \ensuremath{\beta}) \HOLTyOp{CCS}}'' with types
having the corresponding cardinalities $h$ and $k$. Monica Nesi goes
too far by adding another type variable $\gamma$ for value-passing CCS \cite{Nesi:2017wo}.}}, is then defined inductively:
(\texttt{'a} or $\alpha$ is the type variable for recursion variables,
``\HOLinline{\ensuremath{\beta} \HOLTyOp{Relabeling}}'' is \hl{the type of all relabeling functions,
\mbox{\color{blue}{\texttt{`}}} is for backquotes of HOL terms}):
\begin{lstlisting}
val _ = Datatype `CCS = nil
		      | var 'a
		      | prefix ('b Action) CCS
		      | sum CCS CCS
		      | par CCS CCS
		      | restr (('b Label) set) CCS
		      | relab CCS ('b Relabeling)
		      | rec 'a CCS`;
\end{lstlisting}

We have added some grammar support,
 using \hl{HOL's powerful pretty printer}, to represent CCS
processes in more readable forms (c.f. the column \textbf{HOL (abbrev.)}
in Table \ref{tab:ccsoperator}, \hl{which} summarizes 
the main syntactic notations \hl{of} CCS). For the restriction
operator, we have chosen to allow a  set of names as a parameter, rather than a
  single name as in the ordinary  CCS syntax; this simplifies 
the manipulation of 
 processes with different orders of
  nested restrictions.
% Also, we do not assume that the uses of \texttt{var} are
%  guarded by \texttt{rec} of the same variable.

%  (Notice the use of
% recursion operator for representing process constants)
\begin{table}[h]
\begin{center}
\begin{tabular}{|c|c|c|c|}
\hline
\textbf{Operator} & \textbf{CCS Notation} & \textbf{HOL term} &
                                                                \textbf{HOL (abbrev.)}\\
\hline
nil & $\textbf{0}$ & \HOLinline{\HOLConst{nil}} & \HOLinline{\HOLConst{nil}} \\
prefix & $u.P$ & \texttt{prefix u P} & \HOLinline{\HOLFreeVar{u}\HOLSymConst{..}\HOLFreeVar{P}} \\
sum & $P + Q$ & \texttt{sum P Q} & \HOLinline{\HOLFreeVar{P} \HOLSymConst{\ensuremath{+}} \HOLFreeVar{Q}} \\
parallel & $P \,\mid\, Q$ & \texttt{par P Q} & \HOLinline{\HOLFreeVar{P} \HOLSymConst{\ensuremath{\parallel}} \HOLFreeVar{Q}} \\
restriction & $(\nu\;L)\;P$ & \texttt{restr L P} & \HOLinline{\HOLSymConst{\ensuremath{\nu}} \HOLFreeVar{L} \HOLFreeVar{P}}
  \\
recursion & $\recu A P$ & \texttt{rec A P} & \HOLinline{\HOLConst{rec} \HOLFreeVar{A} \HOLFreeVar{P}}
  \\
relabeling & $P\;[r\!\!f]$ & \texttt{relab P rf} & \HOLinline{\HOLConst{relab} \HOLFreeVar{P} \HOLFreeVar{rf}}
  \\
\hline
constant & $A$ & \texttt{var A} & \HOLinline{\HOLConst{var} \HOLFreeVar{A}} \\
invisible action & $\tau$ & \texttt{tau} & \HOLinline{\HOLSymConst{\ensuremath{\tau}}} \\
input action & $a$ & \texttt{label (name a)} & \HOLinline{\HOLConst{In} \HOLFreeVar{a}} \\
output action & $\outC a$ & \texttt{label (coname a)} & \HOLinline{\HOLConst{Out} \HOLFreeVar{a}} \\
\hline
\end{tabular}
\end{center}
\vspace{-1em}
   \caption{Syntax of CCS operators, constant and actions}
   \label{tab:ccsoperator}
\end{table}

The transition semantics of CCS processes follows Structural
Operational Semantics (SOS) in Fig.~\ref{f:LTSCCS}:
\begin{alltt}
\HOLTokenTurnstile{} \HOLFreeVar{u}\HOLSymConst{..}\HOLFreeVar{P} \HOLTokenTransBegin\HOLFreeVar{u}\HOLTokenTransEnd \HOLFreeVar{P}\hfill\texttt{[PREFIX]}
\HOLTokenTurnstile{} \HOLFreeVar{P} \HOLTokenTransBegin\HOLFreeVar{u}\HOLTokenTransEnd \HOLFreeVar{P\sp{\prime}} \HOLSymConst{\HOLTokenImp{}} \HOLFreeVar{P} \HOLSymConst{\ensuremath{+}} \HOLFreeVar{Q} \HOLTokenTransBegin\HOLFreeVar{u}\HOLTokenTransEnd \HOLFreeVar{P\sp{\prime}}\hfill\texttt{[SUM1]}
\HOLTokenTurnstile{} \HOLFreeVar{P} \HOLTokenTransBegin\HOLFreeVar{u}\HOLTokenTransEnd \HOLFreeVar{P\sp{\prime}} \HOLSymConst{\HOLTokenImp{}} \HOLFreeVar{Q} \HOLSymConst{\ensuremath{+}} \HOLFreeVar{P} \HOLTokenTransBegin\HOLFreeVar{u}\HOLTokenTransEnd \HOLFreeVar{P\sp{\prime}}\hfill\texttt{[SUM2]}
\HOLTokenTurnstile{} \HOLFreeVar{P} \HOLTokenTransBegin\HOLFreeVar{u}\HOLTokenTransEnd \HOLFreeVar{P\sp{\prime}} \HOLSymConst{\HOLTokenImp{}} \HOLFreeVar{P} \HOLSymConst{\ensuremath{\parallel}} \HOLFreeVar{Q} \HOLTokenTransBegin\HOLFreeVar{u}\HOLTokenTransEnd \HOLFreeVar{P\sp{\prime}} \HOLSymConst{\ensuremath{\parallel}} \HOLFreeVar{Q}\hfill\texttt{[PAR1]}
\HOLTokenTurnstile{} \HOLFreeVar{P} \HOLTokenTransBegin\HOLFreeVar{u}\HOLTokenTransEnd \HOLFreeVar{P\sp{\prime}} \HOLSymConst{\HOLTokenImp{}} \HOLFreeVar{Q} \HOLSymConst{\ensuremath{\parallel}} \HOLFreeVar{P} \HOLTokenTransBegin\HOLFreeVar{u}\HOLTokenTransEnd \HOLFreeVar{Q} \HOLSymConst{\ensuremath{\parallel}} \HOLFreeVar{P\sp{\prime}}\hfill\texttt{[PAR2]}
\HOLTokenTurnstile{} \HOLFreeVar{P} \HOLTokenTransBegin\HOLConst{label} \HOLFreeVar{l}\HOLTokenTransEnd \HOLFreeVar{P\sp{\prime}} \HOLSymConst{\HOLTokenConj{}} \HOLFreeVar{Q} \HOLTokenTransBegin\HOLConst{label} (\HOLConst{COMPL} \HOLFreeVar{l})\HOLTokenTransEnd \HOLFreeVar{Q\sp{\prime}} \HOLSymConst{\HOLTokenImp{}} \HOLFreeVar{P} \HOLSymConst{\ensuremath{\parallel}} \HOLFreeVar{Q} \HOLTokenTransBegin\HOLSymConst{\ensuremath{\tau}}\HOLTokenTransEnd \HOLFreeVar{P\sp{\prime}} \HOLSymConst{\ensuremath{\parallel}} \HOLFreeVar{Q\sp{\prime}}\hfill\texttt{[PAR3]}
\HOLTokenTurnstile{} \HOLFreeVar{P} \HOLTokenTransBegin\HOLFreeVar{u}\HOLTokenTransEnd \HOLFreeVar{Q} \HOLSymConst{\HOLTokenConj{}} ((\HOLFreeVar{u} \HOLSymConst{=} \HOLSymConst{\ensuremath{\tau}}) \HOLSymConst{\HOLTokenDisj{}} (\HOLFreeVar{u} \HOLSymConst{=} \HOLConst{label} \HOLFreeVar{l}) \HOLSymConst{\HOLTokenConj{}} \HOLFreeVar{l} \HOLSymConst{\HOLTokenNotIn{}} \HOLFreeVar{L} \HOLSymConst{\HOLTokenConj{}} \HOLConst{COMPL} \HOLFreeVar{l} \HOLSymConst{\HOLTokenNotIn{}} \HOLFreeVar{L}) \HOLSymConst{\HOLTokenImp{}}
   \HOLSymConst{\ensuremath{\nu}} \HOLFreeVar{L} \HOLFreeVar{P} \HOLTokenTransBegin\HOLFreeVar{u}\HOLTokenTransEnd \HOLSymConst{\ensuremath{\nu}} \HOLFreeVar{L} \HOLFreeVar{Q}\hfill\texttt{[RESTR]}
\HOLTokenTurnstile{} \HOLFreeVar{P} \HOLTokenTransBegin\HOLFreeVar{u}\HOLTokenTransEnd \HOLFreeVar{Q} \HOLSymConst{\HOLTokenImp{}} \HOLConst{relab} \HOLFreeVar{P} \HOLFreeVar{rf} \HOLTokenTransBegin\HOLConst{relabel} \HOLFreeVar{rf} \HOLFreeVar{u}\HOLTokenTransEnd \HOLConst{relab} \HOLFreeVar{Q} \HOLFreeVar{rf}\hfill\texttt{[RELABELING]}
\HOLTokenTurnstile{} \HOLConst{CCS_Subst} \HOLFreeVar{P} (\HOLConst{rec} \HOLFreeVar{A} \HOLFreeVar{P}) \HOLFreeVar{A} \HOLTokenTransBegin\HOLFreeVar{u}\HOLTokenTransEnd \HOLFreeVar{P\sp{\prime}} \HOLSymConst{\HOLTokenImp{}} \HOLConst{rec} \HOLFreeVar{A} \HOLFreeVar{P} \HOLTokenTransBegin\HOLFreeVar{u}\HOLTokenTransEnd \HOLFreeVar{P\sp{\prime}}\hfill\texttt{[REC]}
\end{alltt}

The rule \texttt{REC} (Recursion)
 says that if we substitute all appearances of variable $A$ in $P$ to
$(\recu A P)$ and the resulting process has a transition to $P'$
with action $u$, then $(\recu A P)$ has the same
transition. From HOL's viewpoint, these
SOS rules are \emph{inductive 
  definitions} on the tenary relation \HOLinline{\HOLConst{TRANS}} of type ``\HOLinline{(\ensuremath{\alpha}, \ensuremath{\beta}) \HOLTyOp{CCS} \HOLTokenTransEnd \ensuremath{\beta} \HOLTyOp{Action} \HOLTokenTransEnd (\ensuremath{\alpha}, \ensuremath{\beta}) \HOLTyOp{CCS} \HOLTokenTransEnd \HOLTyOp{bool}}'', generated by HOL's 
\texttt{Hol_reln} function.

A useful function that we have defined, exploiting the interplay
between HOL4 and Standard ML (and following an idea by Nesi \cite{Nesi:1992ve})
 is a complex Standard ML function
  taking a CCS process and returning a theorem indicating all its
  direct transitions.\footnote{If the input process could yield
    something infinite branching, due to the use of recursion or
    relabeling operators, the program will loop forever without
    outputting a theorem.}
For instance, we know that the process $(a.0 | \bar{a}.0)$ has three
possible transitions: $(a.0 | \bar{a}.0) \overset{a}{\longrightarrow}
(0 | \bar{a}.0)$, $(a.0 | \bar{a}.0)
\overset{\bar{a}}{\longrightarrow} (a.0 | 0)$ and $(a.0 | \bar{a}.0)
\overset{\tau}{\longrightarrow} (0 | 0)$.
To completely describe all possible transitions of a process, if done manually, the
following facts should be proved: (1) there exists transitions from
$(a.0 | \bar{a}.0)$ (optional); (2) the correctness for each of the
transitions; and (3) the non-existence of other transitions.

For large processes it may be surprisingly hard to manually prove the
non-existence of transitions.  Hence the usefulness of appealing to 
the new  function \texttt{CCS\_TRANS\_CONV}. 
For instance this function
is called on the  process $(a.0 | \bar{a}.0)$ thus:
\hl{(\mbox{\color{blue}{\texttt{``}}} is for double-backquotes of HOL
  terms, \mbox{\color{blue}{\texttt{>}}} is HOL's prompt)}
\begin{lstlisting}
> CCS_TRANS_CONV ``par (prefix (label (name "a")) nil)
                       (prefix (label (coname "a")) nil)``
\end{lstlisting}
This returns the following theorem, indeed describing all immediate
transitions of the process:
\begin{alltt}
\HOLTokenTurnstile{} \HOLConst{In} \HOLStringLit{a}\HOLSymConst{..}\HOLConst{nil} \HOLSymConst{\ensuremath{\parallel}} \HOLConst{Out} \HOLStringLit{a}\HOLSymConst{..}\HOLConst{nil} \HOLTokenTransBegin\HOLFreeVar{u}\HOLTokenTransEnd \HOLFreeVar{E} \HOLSymConst{\HOLTokenEquiv{}}
   ((\HOLFreeVar{u} \HOLSymConst{=} \HOLConst{In} \HOLStringLit{a}) \HOLSymConst{\HOLTokenConj{}} (\HOLFreeVar{E} \HOLSymConst{=} \HOLConst{nil} \HOLSymConst{\ensuremath{\parallel}} \HOLConst{Out} \HOLStringLit{a}\HOLSymConst{..}\HOLConst{nil}) \HOLSymConst{\HOLTokenDisj{}}
    (\HOLFreeVar{u} \HOLSymConst{=} \HOLConst{Out} \HOLStringLit{a}) \HOLSymConst{\HOLTokenConj{}} (\HOLFreeVar{E} \HOLSymConst{=} \HOLConst{In} \HOLStringLit{a}\HOLSymConst{..}\HOLConst{nil} \HOLSymConst{\ensuremath{\parallel}} \HOLConst{nil})) \HOLSymConst{\HOLTokenDisj{}}
   (\HOLFreeVar{u} \HOLSymConst{=} \HOLSymConst{\ensuremath{\tau}}) \HOLSymConst{\HOLTokenConj{}} (\HOLFreeVar{E} \HOLSymConst{=} \HOLConst{nil} \HOLSymConst{\ensuremath{\parallel}} \HOLConst{nil})\hfill{[Example.ex_A]}
\end{alltt}

%%%% -*- Mode: LaTeX -*-
%%
%% This is the draft of the 2nd part of EXPRESS/SOS 2018 paper, coauthored by
%% Prof. Davide Sangiorgi and Chun Tian.

\subsection{Bisimulation and Bisimilarity}

\hl{To define (weak) bisimilarity, we first need to define weak
transitions of CCS processes. Following the name adopted by Nesi} \cite{Nesi:1992ve},
we define a (possibly empty) sequence of $\tau$-transitions between
two processes as
a new relation called \texttt{EPS} ($\overset{\epsilon}{\Rightarrow}$), which is the RTC
(reflexive transitive closure, denoted by \mbox{\color{blue}{$^*$}} in
HOL4) of ordinary $\tau$-transitions of CCS processes:
\begin{alltt}
\HOLConst{EPS} \HOLSymConst{=} (\HOLTokenLambda{}\HOLBoundVar{E} \HOLBoundVar{E\sp{\prime}}. \HOLBoundVar{E} \HOLTokenTransBegin\HOLSymConst{\ensuremath{\tau}}\HOLTokenTransEnd \HOLBoundVar{E\sp{\prime}})\HOLSymConst{\HOLTokenSupStar{}}\hfill{[EPS_def]}
\end{alltt}
Then we can define a weak transition as an ordinary transition wrapped by
two $\epsilon$-transitions:
\begin{alltt}
\HOLFreeVar{E} \HOLTokenWeakTransBegin\HOLFreeVar{u}\HOLTokenWeakTransEnd \HOLFreeVar{E\sp{\prime}} \HOLSymConst{\HOLTokenEquiv{}} \HOLSymConst{\HOLTokenExists{}}\HOLBoundVar{E\sb{\mathrm{1}}} \HOLBoundVar{E\sb{\mathrm{2}}}. \HOLFreeVar{E} \HOLSymConst{\HOLTokenEPS} \HOLBoundVar{E\sb{\mathrm{1}}} \HOLSymConst{\HOLTokenConj{}} \HOLBoundVar{E\sb{\mathrm{1}}} \HOLTokenTransBegin\HOLFreeVar{u}\HOLTokenTransEnd \HOLBoundVar{E\sb{\mathrm{2}}} \HOLSymConst{\HOLTokenConj{}} \HOLBoundVar{E\sb{\mathrm{2}}} \HOLSymConst{\HOLTokenEPS} \HOLFreeVar{E\sp{\prime}}\hfill{[WEAK_TRANS]}
\end{alltt}

For the definition of bisimilarity and the associated coinduction
principle \cite{sangiorgi2011advanced}, we have taken
advantage of HOL's coinductive relation package (\texttt{Hol_coreln} \cite{holdesc}),
a new tool since its Kananaskis-11 release (March 3,
2017).\footnote{\url{https://hol-theorem-prover.org/kananaskis-11.release.html\#new-tools}}
% HOL4 (Kananaskis-11 or later releases).
This essentially amounts to defining bisimilarity as the greatest
fixed-point of the appropriate functional on relations. 
Precisely we call 
the \texttt{Hol_coreln}
command as follows: (here \texttt{WB} is meant to be
\texttt{WEAK_EQUIV} ($\approx$) in the rest of this paper;
{\tt !} and {\tt ?} stand for universal and
existential quantifiers.)
\begin{lstlisting}
val (WB_rules, WB_coind, WB_cases) = Hol_coreln `
    (!(P :('a, 'b) CCS) (Q :('a, 'b) CCS).
       (!l.
	 (!P'. TRANS P (label l) P' ==>
	       (?Q'. WEAK_TRANS Q (label l) Q' /\ WB P' Q')) /\
	 (!Q'. TRANS Q (label l) Q' ==>
	       (?P'. WEAK_TRANS P (label l) P' /\ WB P' Q'))) /\
       (!P'. TRANS P tau P' ==> (?Q'. EPS Q Q' /\ WB P' Q')) /\
       (!Q'. TRANS Q tau Q' ==> (?P'. EPS P P' /\ WB P' Q'))
     ==> WB P Q)`;
\end{lstlisting}
\texttt{Hol_coreln} returns 3 theorems, of the first being always the
same as input term\footnote{\hl{Our mixing of HOL notation and mathematical
  notation in this paper is not arbitrary. We have to paste here the
  original proof scripts, which is written in HOL's ASCII term
  notation} (c.f. \cite{holdesc} for more details). \hl{HOL4 also supports writing Unicode symbols directly in
  proof scripts but we did not make use of them. However, all formal definitions and
  theorems in the paper are automatically generated from HOL4 in
  which we have made an effort for generating
  Unicode and TeX outputs as natural as possible. What is really
  arbitrary is the presense/absense of outermost universal
  quantifiers in all generated theorems.}} (now proved automatically as a theorem).
The second and third theorems, namely \texttt{WB_coind} and \texttt{WB_cases},
express the coinduction proof method for bisimilarity 
(i.e.~any bisimulation is contained in bisimilarity)
and the fixed-point property of bisimilarity
(bisimilarity itself is a bisimulation, thus the largest
bisimulation):
\begin{enumerate}
\item 
\begin{small}
\begin{alltt}
\HOLTokenTurnstile{} \HOLSymConst{\HOLTokenForall{}}\HOLBoundVar{WB\sp{\prime}}.
       (\HOLSymConst{\HOLTokenForall{}}\HOLBoundVar{a\sb{\mathrm{0}}} \HOLBoundVar{a\sb{\mathrm{1}}}.
            \HOLBoundVar{WB\sp{\prime}} \HOLBoundVar{a\sb{\mathrm{0}}} \HOLBoundVar{a\sb{\mathrm{1}}} \HOLSymConst{\HOLTokenImp{}}
            (\HOLSymConst{\HOLTokenForall{}}\HOLBoundVar{l}.
                 (\HOLSymConst{\HOLTokenForall{}}\HOLBoundVar{P\sp{\prime}}.
                      \HOLBoundVar{a\sb{\mathrm{0}}} \HOLTokenTransBegin\HOLConst{label} \HOLBoundVar{l}\HOLTokenTransEnd \HOLBoundVar{P\sp{\prime}} \HOLSymConst{\HOLTokenImp{}}
                      \HOLSymConst{\HOLTokenExists{}}\HOLBoundVar{Q\sp{\prime}}. \HOLBoundVar{a\sb{\mathrm{1}}} \HOLTokenWeakTransBegin\HOLConst{label} \HOLBoundVar{l}\HOLTokenWeakTransEnd \HOLBoundVar{Q\sp{\prime}} \HOLSymConst{\HOLTokenConj{}} \HOLBoundVar{WB\sp{\prime}} \HOLBoundVar{P\sp{\prime}} \HOLBoundVar{Q\sp{\prime}}) \HOLSymConst{\HOLTokenConj{}}
                 \HOLSymConst{\HOLTokenForall{}}\HOLBoundVar{Q\sp{\prime}}.
                     \HOLBoundVar{a\sb{\mathrm{1}}} \HOLTokenTransBegin\HOLConst{label} \HOLBoundVar{l}\HOLTokenTransEnd \HOLBoundVar{Q\sp{\prime}} \HOLSymConst{\HOLTokenImp{}}
                     \HOLSymConst{\HOLTokenExists{}}\HOLBoundVar{P\sp{\prime}}. \HOLBoundVar{a\sb{\mathrm{0}}} \HOLTokenWeakTransBegin\HOLConst{label} \HOLBoundVar{l}\HOLTokenWeakTransEnd \HOLBoundVar{P\sp{\prime}} \HOLSymConst{\HOLTokenConj{}} \HOLBoundVar{WB\sp{\prime}} \HOLBoundVar{P\sp{\prime}} \HOLBoundVar{Q\sp{\prime}}) \HOLSymConst{\HOLTokenConj{}}
            (\HOLSymConst{\HOLTokenForall{}}\HOLBoundVar{P\sp{\prime}}. \HOLBoundVar{a\sb{\mathrm{0}}} \HOLTokenTransBegin\HOLSymConst{\ensuremath{\tau}}\HOLTokenTransEnd \HOLBoundVar{P\sp{\prime}} \HOLSymConst{\HOLTokenImp{}} \HOLSymConst{\HOLTokenExists{}}\HOLBoundVar{Q\sp{\prime}}. \HOLBoundVar{a\sb{\mathrm{1}}} \HOLSymConst{\HOLTokenEPS} \HOLBoundVar{Q\sp{\prime}} \HOLSymConst{\HOLTokenConj{}} \HOLBoundVar{WB\sp{\prime}} \HOLBoundVar{P\sp{\prime}} \HOLBoundVar{Q\sp{\prime}}) \HOLSymConst{\HOLTokenConj{}}
            \HOLSymConst{\HOLTokenForall{}}\HOLBoundVar{Q\sp{\prime}}. \HOLBoundVar{a\sb{\mathrm{1}}} \HOLTokenTransBegin\HOLSymConst{\ensuremath{\tau}}\HOLTokenTransEnd \HOLBoundVar{Q\sp{\prime}} \HOLSymConst{\HOLTokenImp{}} \HOLSymConst{\HOLTokenExists{}}\HOLBoundVar{P\sp{\prime}}. \HOLBoundVar{a\sb{\mathrm{0}}} \HOLSymConst{\HOLTokenEPS} \HOLBoundVar{P\sp{\prime}} \HOLSymConst{\HOLTokenConj{}} \HOLBoundVar{WB\sp{\prime}} \HOLBoundVar{P\sp{\prime}} \HOLBoundVar{Q\sp{\prime}}) \HOLSymConst{\HOLTokenImp{}}
       \HOLSymConst{\HOLTokenForall{}}\HOLBoundVar{a\sb{\mathrm{0}}} \HOLBoundVar{a\sb{\mathrm{1}}}. \HOLBoundVar{WB\sp{\prime}} \HOLBoundVar{a\sb{\mathrm{0}}} \HOLBoundVar{a\sb{\mathrm{1}}} \HOLSymConst{\HOLTokenImp{}} \HOLConst{WB} \HOLBoundVar{a\sb{\mathrm{0}}} \HOLBoundVar{a\sb{\mathrm{1}}}\hfill{[WB_coind, WEAK_EQUIV_coind]}
\end{alltt}
\end{small}
\item %The fixed point (or cases) theorem for $\wb$:
\begin{small}
\begin{alltt}
\HOLTokenTurnstile{} \HOLSymConst{\HOLTokenForall{}}\HOLBoundVar{a\sb{\mathrm{0}}} \HOLBoundVar{a\sb{\mathrm{1}}}.
       \HOLConst{WB} \HOLBoundVar{a\sb{\mathrm{0}}} \HOLBoundVar{a\sb{\mathrm{1}}} \HOLSymConst{\HOLTokenEquiv{}}
       (\HOLSymConst{\HOLTokenForall{}}\HOLBoundVar{l}.
            (\HOLSymConst{\HOLTokenForall{}}\HOLBoundVar{P\sp{\prime}}.
                 \HOLBoundVar{a\sb{\mathrm{0}}} \HOLTokenTransBegin\HOLConst{label} \HOLBoundVar{l}\HOLTokenTransEnd \HOLBoundVar{P\sp{\prime}} \HOLSymConst{\HOLTokenImp{}}
                 \HOLSymConst{\HOLTokenExists{}}\HOLBoundVar{Q\sp{\prime}}. \HOLBoundVar{a\sb{\mathrm{1}}} \HOLTokenWeakTransBegin\HOLConst{label} \HOLBoundVar{l}\HOLTokenWeakTransEnd \HOLBoundVar{Q\sp{\prime}} \HOLSymConst{\HOLTokenConj{}} \HOLConst{WB} \HOLBoundVar{P\sp{\prime}} \HOLBoundVar{Q\sp{\prime}}) \HOLSymConst{\HOLTokenConj{}}
            \HOLSymConst{\HOLTokenForall{}}\HOLBoundVar{Q\sp{\prime}}.
                \HOLBoundVar{a\sb{\mathrm{1}}} \HOLTokenTransBegin\HOLConst{label} \HOLBoundVar{l}\HOLTokenTransEnd \HOLBoundVar{Q\sp{\prime}} \HOLSymConst{\HOLTokenImp{}}
                \HOLSymConst{\HOLTokenExists{}}\HOLBoundVar{P\sp{\prime}}. \HOLBoundVar{a\sb{\mathrm{0}}} \HOLTokenWeakTransBegin\HOLConst{label} \HOLBoundVar{l}\HOLTokenWeakTransEnd \HOLBoundVar{P\sp{\prime}} \HOLSymConst{\HOLTokenConj{}} \HOLConst{WB} \HOLBoundVar{P\sp{\prime}} \HOLBoundVar{Q\sp{\prime}}) \HOLSymConst{\HOLTokenConj{}}
       (\HOLSymConst{\HOLTokenForall{}}\HOLBoundVar{P\sp{\prime}}. \HOLBoundVar{a\sb{\mathrm{0}}} \HOLTokenTransBegin\HOLSymConst{\ensuremath{\tau}}\HOLTokenTransEnd \HOLBoundVar{P\sp{\prime}} \HOLSymConst{\HOLTokenImp{}} \HOLSymConst{\HOLTokenExists{}}\HOLBoundVar{Q\sp{\prime}}. \HOLBoundVar{a\sb{\mathrm{1}}} \HOLSymConst{\HOLTokenEPS} \HOLBoundVar{Q\sp{\prime}} \HOLSymConst{\HOLTokenConj{}} \HOLConst{WB} \HOLBoundVar{P\sp{\prime}} \HOLBoundVar{Q\sp{\prime}}) \HOLSymConst{\HOLTokenConj{}}
       \HOLSymConst{\HOLTokenForall{}}\HOLBoundVar{Q\sp{\prime}}. \HOLBoundVar{a\sb{\mathrm{1}}} \HOLTokenTransBegin\HOLSymConst{\ensuremath{\tau}}\HOLTokenTransEnd \HOLBoundVar{Q\sp{\prime}} \HOLSymConst{\HOLTokenImp{}} \HOLSymConst{\HOLTokenExists{}}\HOLBoundVar{P\sp{\prime}}. \HOLBoundVar{a\sb{\mathrm{0}}} \HOLSymConst{\HOLTokenEPS} \HOLBoundVar{P\sp{\prime}} \HOLSymConst{\HOLTokenConj{}} \HOLConst{WB} \HOLBoundVar{P\sp{\prime}} \HOLBoundVar{Q\sp{\prime}}\hfill{[WB_cases, WEAK_EQUIV_cases]}
\end{alltt}
\end{small}
\end{enumerate}

\hl{The coinduction principle \texttt{WB_coind} says that any
bisimulation is contained in the resulting relation (i.e.~it is
largest), but it didn't constrain the resulting relation in the set of
fixed points (e.g.~even the universal relation --- the set of all
pairs --- would fit with this theorem); the
purpose of \texttt{WB_cases} is to
further assert that the resulting relation is indeed a
fixed point. Thus \texttt{WB_coind} and \texttt{WB_cases}
together make sure that bisimilarity is the greatest
fixed point, as
the former contributes to ``greatest'' while the latter
contributes to ``fixed point''.}
Without HOL's coinductive relation package, bisimilarity
would have to be defined by following literally
Def.~\ref{d:wb};  then other properties of bisimilarity, such
as the fixed-point property in \texttt{WB_cases}, would have to be
derived manually (which is quite hard; indeed it was one of the main results
in Nesi's formalisation work  in HOL88 \cite{Nesi:1992ve}).

% \finish{I have removed other things as i fear they would confuse a
%   reader and I think the main point we wanted to say are now clearly
%   expressed}  (I don't buy this any more, sorry)

%%%% -*- Mode: LaTeX -*-
%%
%% This is the draft of the 2nd part of EXPRESS/SOS 2018 paper, co-authored by
%% Prof. Davide Sangiorgi and Chun Tian.

\subsection{Context, guardedness and (pre)congruence}

% We need to find a suitable formal definition of 
% context. There're multiple ways. Here 

We have chosen to use $\lambda$-expressions (\hl{with the type}
``\HOLinline{(\ensuremath{\alpha}, \ensuremath{\beta}) \HOLTyOp{CCS} \HOLTokenTransEnd (\ensuremath{\alpha}, \ensuremath{\beta}) \HOLTyOp{CCS}}'')
to represent \emph{multi-hole contexts}.
\hl{This choice has a significant advantage over \emph{one-hole
contexts}, as each hole corresponds to one
appearance of the \emph{same} variable in single-variable
expressions (or equations). Thus \emph{contexts} can be directly used in
formulating the unique solution of equations theorems in
single-variable cases.} The precise definition is given inductively:
\begin{alltt}
\HOLConst{CONTEXT} (\HOLTokenLambda{}\HOLBoundVar{t}. \HOLBoundVar{t})
\HOLConst{CONTEXT} (\HOLTokenLambda{}\HOLBoundVar{t}. \HOLFreeVar{p})
\HOLConst{CONTEXT} \HOLFreeVar{e} \HOLSymConst{\HOLTokenImp{}} \HOLConst{CONTEXT} (\HOLTokenLambda{}\HOLBoundVar{t}. \HOLFreeVar{a}\HOLSymConst{..}\HOLFreeVar{e} \HOLBoundVar{t})
\HOLConst{CONTEXT} \HOLFreeVar{e\sb{\mathrm{1}}} \HOLSymConst{\HOLTokenConj{}} \HOLConst{CONTEXT} \HOLFreeVar{e\sb{\mathrm{2}}} \HOLSymConst{\HOLTokenImp{}} \HOLConst{CONTEXT} (\HOLTokenLambda{}\HOLBoundVar{t}. \HOLFreeVar{e\sb{\mathrm{1}}} \HOLBoundVar{t} \HOLSymConst{\ensuremath{+}} \HOLFreeVar{e\sb{\mathrm{2}}} \HOLBoundVar{t})
\HOLConst{CONTEXT} \HOLFreeVar{e\sb{\mathrm{1}}} \HOLSymConst{\HOLTokenConj{}} \HOLConst{CONTEXT} \HOLFreeVar{e\sb{\mathrm{2}}} \HOLSymConst{\HOLTokenImp{}} \HOLConst{CONTEXT} (\HOLTokenLambda{}\HOLBoundVar{t}. \HOLFreeVar{e\sb{\mathrm{1}}} \HOLBoundVar{t} \HOLSymConst{\ensuremath{\parallel}} \HOLFreeVar{e\sb{\mathrm{2}}} \HOLBoundVar{t})
\HOLConst{CONTEXT} \HOLFreeVar{e} \HOLSymConst{\HOLTokenImp{}} \HOLConst{CONTEXT} (\HOLTokenLambda{}\HOLBoundVar{t}. \HOLSymConst{\ensuremath{\nu}} \HOLFreeVar{L} (\HOLFreeVar{e} \HOLBoundVar{t}))
\HOLConst{CONTEXT} \HOLFreeVar{e} \HOLSymConst{\HOLTokenImp{}} \HOLConst{CONTEXT} (\HOLTokenLambda{}\HOLBoundVar{t}. \HOLConst{relab} (\HOLFreeVar{e} \HOLBoundVar{t}) \HOLFreeVar{rf})\hfill{[CONTEXT_rules]}
\end{alltt}

A context is \emph{weakly guarded} (\texttt{WG}) if each hole is
underneath a prefix:
\begin{alltt}
\HOLConst{WG} (\HOLTokenLambda{}\HOLBoundVar{t}. \HOLFreeVar{p})
\HOLConst{CONTEXT} \HOLFreeVar{e} \HOLSymConst{\HOLTokenImp{}} \HOLConst{WG} (\HOLTokenLambda{}\HOLBoundVar{t}. \HOLFreeVar{a}\HOLSymConst{..}\HOLFreeVar{e} \HOLBoundVar{t})
\HOLConst{WG} \HOLFreeVar{e\sb{\mathrm{1}}} \HOLSymConst{\HOLTokenConj{}} \HOLConst{WG} \HOLFreeVar{e\sb{\mathrm{2}}} \HOLSymConst{\HOLTokenImp{}} \HOLConst{WG} (\HOLTokenLambda{}\HOLBoundVar{t}. \HOLFreeVar{e\sb{\mathrm{1}}} \HOLBoundVar{t} \HOLSymConst{\ensuremath{+}} \HOLFreeVar{e\sb{\mathrm{2}}} \HOLBoundVar{t})
\HOLConst{WG} \HOLFreeVar{e\sb{\mathrm{1}}} \HOLSymConst{\HOLTokenConj{}} \HOLConst{WG} \HOLFreeVar{e\sb{\mathrm{2}}} \HOLSymConst{\HOLTokenImp{}} \HOLConst{WG} (\HOLTokenLambda{}\HOLBoundVar{t}. \HOLFreeVar{e\sb{\mathrm{1}}} \HOLBoundVar{t} \HOLSymConst{\ensuremath{\parallel}} \HOLFreeVar{e\sb{\mathrm{2}}} \HOLBoundVar{t})
\HOLConst{WG} \HOLFreeVar{e} \HOLSymConst{\HOLTokenImp{}} \HOLConst{WG} (\HOLTokenLambda{}\HOLBoundVar{t}. \HOLSymConst{\ensuremath{\nu}} \HOLFreeVar{L} (\HOLFreeVar{e} \HOLBoundVar{t}))
\HOLConst{WG} \HOLFreeVar{e} \HOLSymConst{\HOLTokenImp{}} \HOLConst{WG} (\HOLTokenLambda{}\HOLBoundVar{t}. \HOLConst{relab} (\HOLFreeVar{e} \HOLBoundVar{t}) \HOLFreeVar{rf})\hfill{[WG_rules]}
\end{alltt}
% (Notice the differences between a weak guarded context and a normal
% one: $\lambda t. t$ is not weakly guarded as the variable is directly
% exposed without any prefixed action. And $\lambda t. a.e[t]$ is weakly
% guarded as long as $e[\cdot]$ is a context, not necessary weakly guarded.)

A context is \emph{(strongly) guarded} (\texttt{SG}) if each hole is underneath a \emph{visible} prefix:
\begin{alltt}
\HOLConst{SG} (\HOLTokenLambda{}\HOLBoundVar{t}. \HOLFreeVar{p})
\HOLConst{CONTEXT} \HOLFreeVar{e} \HOLSymConst{\HOLTokenImp{}} \HOLConst{SG} (\HOLTokenLambda{}\HOLBoundVar{t}. \HOLConst{label} \HOLFreeVar{l}\HOLSymConst{..}\HOLFreeVar{e} \HOLBoundVar{t})
\HOLConst{SG} \HOLFreeVar{e} \HOLSymConst{\HOLTokenImp{}} \HOLConst{SG} (\HOLTokenLambda{}\HOLBoundVar{t}. \HOLFreeVar{a}\HOLSymConst{..}\HOLFreeVar{e} \HOLBoundVar{t})
\HOLConst{SG} \HOLFreeVar{e\sb{\mathrm{1}}} \HOLSymConst{\HOLTokenConj{}} \HOLConst{SG} \HOLFreeVar{e\sb{\mathrm{2}}} \HOLSymConst{\HOLTokenImp{}} \HOLConst{SG} (\HOLTokenLambda{}\HOLBoundVar{t}. \HOLFreeVar{e\sb{\mathrm{1}}} \HOLBoundVar{t} \HOLSymConst{\ensuremath{+}} \HOLFreeVar{e\sb{\mathrm{2}}} \HOLBoundVar{t})
\HOLConst{SG} \HOLFreeVar{e\sb{\mathrm{1}}} \HOLSymConst{\HOLTokenConj{}} \HOLConst{SG} \HOLFreeVar{e\sb{\mathrm{2}}} \HOLSymConst{\HOLTokenImp{}} \HOLConst{SG} (\HOLTokenLambda{}\HOLBoundVar{t}. \HOLFreeVar{e\sb{\mathrm{1}}} \HOLBoundVar{t} \HOLSymConst{\ensuremath{\parallel}} \HOLFreeVar{e\sb{\mathrm{2}}} \HOLBoundVar{t})
\HOLConst{SG} \HOLFreeVar{e} \HOLSymConst{\HOLTokenImp{}} \HOLConst{SG} (\HOLTokenLambda{}\HOLBoundVar{t}. \HOLSymConst{\ensuremath{\nu}} \HOLFreeVar{L} (\HOLFreeVar{e} \HOLBoundVar{t}))
\HOLConst{SG} \HOLFreeVar{e} \HOLSymConst{\HOLTokenImp{}} \HOLConst{SG} (\HOLTokenLambda{}\HOLBoundVar{t}. \HOLConst{relab} (\HOLFreeVar{e} \HOLBoundVar{t}) \HOLFreeVar{rf})\hfill{[SG_rules]}
\end{alltt}

A context is \emph{sequential} (\texttt{SEQ}) if each of its \emph{subcontexts} with
a hole, apart from the hole itself, is in forms of prefixes or sums:
(c.f. Def.~\ref{def:guardness} and p.101,157 of \cite{Mil89} for
the informal definitions.)
 \begin{alltt}
\HOLConst{SEQ} (\HOLTokenLambda{}\HOLBoundVar{t}. \HOLBoundVar{t})
\HOLConst{SEQ} (\HOLTokenLambda{}\HOLBoundVar{t}. \HOLFreeVar{p})
\HOLConst{SEQ} \HOLFreeVar{e} \HOLSymConst{\HOLTokenImp{}} \HOLConst{SEQ} (\HOLTokenLambda{}\HOLBoundVar{t}. \HOLFreeVar{a}\HOLSymConst{..}\HOLFreeVar{e} \HOLBoundVar{t})
\HOLConst{SEQ} \HOLFreeVar{e\sb{\mathrm{1}}} \HOLSymConst{\HOLTokenConj{}} \HOLConst{SEQ} \HOLFreeVar{e\sb{\mathrm{2}}} \HOLSymConst{\HOLTokenImp{}} \HOLConst{SEQ} (\HOLTokenLambda{}\HOLBoundVar{t}. \HOLFreeVar{e\sb{\mathrm{1}}} \HOLBoundVar{t} \HOLSymConst{\ensuremath{+}} \HOLFreeVar{e\sb{\mathrm{2}}} \HOLBoundVar{t})\hfill{[SEQ_rules]}
\end{alltt}

In the same manner, \hl{we can also define variants of contexts (\texttt{GCONTEXT}) and weakly guarded
contexts (\texttt{WGS}) in which only guarded sums are allowed (i.e.~arbitrary sums are forbidden):}
\begin{alltt}
\HOLConst{GCONTEXT} (\HOLTokenLambda{}\HOLBoundVar{t}. \HOLBoundVar{t})
\HOLConst{GCONTEXT} (\HOLTokenLambda{}\HOLBoundVar{t}. \HOLFreeVar{p})
\HOLConst{GCONTEXT} \HOLFreeVar{e} \HOLSymConst{\HOLTokenImp{}} \HOLConst{GCONTEXT} (\HOLTokenLambda{}\HOLBoundVar{t}. \HOLFreeVar{a}\HOLSymConst{..}\HOLFreeVar{e} \HOLBoundVar{t})
\HOLConst{GCONTEXT} \HOLFreeVar{e\sb{\mathrm{1}}} \HOLSymConst{\HOLTokenConj{}} \HOLConst{GCONTEXT} \HOLFreeVar{e\sb{\mathrm{2}}} \HOLSymConst{\HOLTokenImp{}} \HOLConst{GCONTEXT} (\HOLTokenLambda{}\HOLBoundVar{t}. \HOLFreeVar{a\sb{\mathrm{1}}}\HOLSymConst{..}\HOLFreeVar{e\sb{\mathrm{1}}} \HOLBoundVar{t} \HOLSymConst{\ensuremath{+}} \HOLFreeVar{a\sb{\mathrm{2}}}\HOLSymConst{..}\HOLFreeVar{e\sb{\mathrm{2}}} \HOLBoundVar{t})
\HOLConst{GCONTEXT} \HOLFreeVar{e\sb{\mathrm{1}}} \HOLSymConst{\HOLTokenConj{}} \HOLConst{GCONTEXT} \HOLFreeVar{e\sb{\mathrm{2}}} \HOLSymConst{\HOLTokenImp{}} \HOLConst{GCONTEXT} (\HOLTokenLambda{}\HOLBoundVar{t}. \HOLFreeVar{e\sb{\mathrm{1}}} \HOLBoundVar{t} \HOLSymConst{\ensuremath{\parallel}} \HOLFreeVar{e\sb{\mathrm{2}}} \HOLBoundVar{t})
\HOLConst{GCONTEXT} \HOLFreeVar{e} \HOLSymConst{\HOLTokenImp{}} \HOLConst{GCONTEXT} (\HOLTokenLambda{}\HOLBoundVar{t}. \HOLSymConst{\ensuremath{\nu}} \HOLFreeVar{L} (\HOLFreeVar{e} \HOLBoundVar{t}))
\HOLConst{GCONTEXT} \HOLFreeVar{e} \HOLSymConst{\HOLTokenImp{}} \HOLConst{GCONTEXT} (\HOLTokenLambda{}\HOLBoundVar{t}. \HOLConst{relab} (\HOLFreeVar{e} \HOLBoundVar{t}) \HOLFreeVar{rf})\hfill{[GCONTEXT_rules]}
\end{alltt}
\begin{alltt}
\HOLConst{WGS} (\HOLTokenLambda{}\HOLBoundVar{t}. \HOLFreeVar{p})
\HOLConst{GCONTEXT} \HOLFreeVar{e} \HOLSymConst{\HOLTokenImp{}} \HOLConst{WGS} (\HOLTokenLambda{}\HOLBoundVar{t}. \HOLFreeVar{a}\HOLSymConst{..}\HOLFreeVar{e} \HOLBoundVar{t})
\HOLConst{GCONTEXT} \HOLFreeVar{e\sb{\mathrm{1}}} \HOLSymConst{\HOLTokenConj{}} \HOLConst{GCONTEXT} \HOLFreeVar{e\sb{\mathrm{2}}} \HOLSymConst{\HOLTokenImp{}} \HOLConst{WGS} (\HOLTokenLambda{}\HOLBoundVar{t}. \HOLFreeVar{a\sb{\mathrm{1}}}\HOLSymConst{..}\HOLFreeVar{e\sb{\mathrm{1}}} \HOLBoundVar{t} \HOLSymConst{\ensuremath{+}} \HOLFreeVar{a\sb{\mathrm{2}}}\HOLSymConst{..}\HOLFreeVar{e\sb{\mathrm{2}}} \HOLBoundVar{t})
\HOLConst{WGS} \HOLFreeVar{e\sb{\mathrm{1}}} \HOLSymConst{\HOLTokenConj{}} \HOLConst{WGS} \HOLFreeVar{e\sb{\mathrm{2}}} \HOLSymConst{\HOLTokenImp{}} \HOLConst{WGS} (\HOLTokenLambda{}\HOLBoundVar{t}. \HOLFreeVar{e\sb{\mathrm{1}}} \HOLBoundVar{t} \HOLSymConst{\ensuremath{\parallel}} \HOLFreeVar{e\sb{\mathrm{2}}} \HOLBoundVar{t})
\HOLConst{WGS} \HOLFreeVar{e} \HOLSymConst{\HOLTokenImp{}} \HOLConst{WGS} (\HOLTokenLambda{}\HOLBoundVar{t}. \HOLSymConst{\ensuremath{\nu}} \HOLFreeVar{L} (\HOLFreeVar{e} \HOLBoundVar{t}))
\HOLConst{WGS} \HOLFreeVar{e} \HOLSymConst{\HOLTokenImp{}} \HOLConst{WGS} (\HOLTokenLambda{}\HOLBoundVar{t}. \HOLConst{relab} (\HOLFreeVar{e} \HOLBoundVar{t}) \HOLFreeVar{rf})\hfill{[WGS_rules]}
\end{alltt}

% Many lemmas have then to be proved for their
% relationships among these kinds of contexts and properties about their
% compositions. These proofs are usually tedious but  long,  due to multiple levels of
% inductions on the structure of the contexts.

A (pre)congruence is a relation on CCS processes defined on top of
\texttt{CONTEXT}. \hl{The only difference between congruence and
precongruence is that the former must be an equivalence (reflexive,
symmetric, transitive), while the latter can be just a preorder (reflexive, transitive)}:
\begin{alltt}
\HOLConst{congruence} \HOLFreeVar{R} \HOLSymConst{\HOLTokenEquiv{}}
\HOLConst{equivalence} \HOLFreeVar{R} \HOLSymConst{\HOLTokenConj{}}
\HOLSymConst{\HOLTokenForall{}}\HOLBoundVar{x} \HOLBoundVar{y} \HOLBoundVar{ctx}. \HOLConst{CONTEXT} \HOLBoundVar{ctx} \HOLSymConst{\HOLTokenImp{}} \HOLFreeVar{R} \HOLBoundVar{x} \HOLBoundVar{y} \HOLSymConst{\HOLTokenImp{}} \HOLFreeVar{R} (\HOLBoundVar{ctx} \HOLBoundVar{x}) (\HOLBoundVar{ctx} \HOLBoundVar{y})\hfill{[congruence_def]}
\hfill{[precongruence_def]}
\end{alltt}
\vspace{-4ex}
\begin{alltt}
\HOLConst{precongruence} \HOLFreeVar{R} \HOLSymConst{\HOLTokenEquiv{}}
\HOLConst{PreOrder} \HOLFreeVar{R} \HOLSymConst{\HOLTokenConj{}} \HOLSymConst{\HOLTokenForall{}}\HOLBoundVar{x} \HOLBoundVar{y} \HOLBoundVar{ctx}. \HOLConst{CONTEXT} \HOLBoundVar{ctx} \HOLSymConst{\HOLTokenImp{}} \HOLFreeVar{R} \HOLBoundVar{x} \HOLBoundVar{y} \HOLSymConst{\HOLTokenImp{}} \HOLFreeVar{R} (\HOLBoundVar{ctx} \HOLBoundVar{x}) (\HOLBoundVar{ctx} \HOLBoundVar{y})
\end{alltt}

\hl{For example, we can prove that, strong bisimilarity ($\sim$) and
rooted bisimilarity ($\approx^c$) are both congruence by above
definition: (the transitivity proof of rooted bisimilarity is actually not easy.)}
\begin{alltt}
\HOLTokenTurnstile{} \HOLConst{congruence} \HOLConst{STRONG_EQUIV}\hfill{[STRONG_EQUIV_congruence]}
\HOLTokenTurnstile{} \HOLConst{congruence} \HOLConst{OBS_CONGR}\hfill{[OBS_CONGR_congruence]}
\end{alltt}

\hl{Although weak bisimilarity ($\approx$) is \emph{not} congruence
  with respect to~\texttt{CONTEXT}, it is indeed ``congruence''
  with respect to~\texttt{GCONTEXT} (or if the CCS syntax were defined with
  only guarded sum operator \cite{sangiorgi2015equations}) as weak
  bisimilarity ($\approx$) is indeed preserved by 
  weakly-guarded sums.}

% no space for WEAK_EQUIV_PRESD_BY_GUARDED_SUM

%%%% -*- Mode: LaTeX -*-
%%
%% This is the draft of the 2nd part of EXPRESS/SOS 2018 paper, co-authored by
%% Prof. Davide Sangiorgi and Chun Tian.

\subsection{Coarsest (pre)congruence contained in $\approx$ ($\succeq_{\mathrm{bis}}$)}
\label{s:coarsest}

% The ``coarsest congruence contained in weak bisimilarity ($\approx$)''
% theorem in CCS is somehow special, as its current known proofs either
% rely on quite restricted conditions, or have an extremely complicated proof
% (c.f. van Glabbeek's paper) in
% which ordinal theory is required.  Actually even the relationship
% between its name and statement is not well explained in many CCS
% textbooks. But van Glabbeek's paper has given the so far clearest
% explanation, here we briefly repeat his arguments:

As bisimilarity ($\approx$) is not congruence, for this reason rooted bisimilarity has been
introduced (Def.~\ref{d:rootedBisimilarity}).
In this subsection we discuss two proofs of an important result stating that
rooted bisimilarity is the coarsest congruence contained in
bisimilarity \cite{van2005characterisation,Gorrieri:2015jt,Mil89} (thus it
is the best one):
\begin{equation}
\label{eq:coarsest}
\forall p\ \ q.\ p\ \rapprox\ \! q\ \Longleftrightarrow\ ( \forall r.\ p\ +\
r\ \approx\ q\ +\ r )\enspace.
\end{equation}

% : Milner's proof \cite{Mil89}, that requires
% that no process can use all available names; and van Glabbeek's proof
% \cite{van2005characterisation}, that does not require additional assumptions. 

Actually the coarsest congruence
contained in (weak) bisimilarity, namely the \emph{bisimilarity
  congruence} \cite{van2005characterisation}, can be constructed as
the \emph{composition closure} (\texttt{CC}) of (weak) bisimilarity:
\begin{alltt}
\HOLConst{WEAK_CONGR} \HOLSymConst{=} \HOLConst{CC} \HOLConst{WEAK_EQUIV}\hfill{[WEAK_CONGR]}
\HOLConst{CC} \HOLFreeVar{R} \HOLSymConst{=} (\HOLTokenLambda{}\HOLBoundVar{g} \HOLBoundVar{h}. \HOLSymConst{\HOLTokenForall{}}\HOLBoundVar{c}. \HOLConst{CONTEXT} \HOLBoundVar{c} \HOLSymConst{\HOLTokenImp{}} \HOLFreeVar{R} (\HOLBoundVar{c} \HOLBoundVar{g}) (\HOLBoundVar{c} \HOLBoundVar{h}))\hfill{[CC_def]}
\end{alltt}
\hl{Indeed, for any relation $R$ 
on CCS processes, the composition closure of $R$ is always finer (i.e.~smaller) than
$R$, no matter if $R$ is (pre)congruence or not}\footnote{\hl{But if $R$ is
  equivalence (or preorder), the composition closure of $R$ must be congruence
  (or precongruence). Also there is no need to put $R\ g\ h$ in the antecedent of
\texttt{CC\_def}, as this is anyhow obtained from the trivial context $(\lambda x.\,x)$.}}: (here $\subseteq_r$ stands for \emph{relational subset})
\begin{alltt}
\HOLTokenTurnstile{} \HOLSymConst{\HOLTokenForall{}}\HOLBoundVar{R}. \HOLConst{CC} \HOLBoundVar{R} \HOLSymConst{\HOLTokenRSubset{}} \HOLBoundVar{R}\hfill{[CC_is_finer]}
\end{alltt}
\hl{Furthermore, we prove that any (pre)congruence contained in $R$ (which
itself may not be) is contained in the composition closure of $R$
(hence the closure is the coarsest one):}
\begin{alltt}
\HOLTokenTurnstile{} \HOLSymConst{\HOLTokenForall{}}\HOLBoundVar{R} \HOLBoundVar{R\sp{\prime}}. \HOLConst{congruence} \HOLBoundVar{R\sp{\prime}} \HOLSymConst{\HOLTokenConj{}} \HOLBoundVar{R\sp{\prime}} \HOLSymConst{\HOLTokenRSubset{}} \HOLBoundVar{R} \HOLSymConst{\HOLTokenImp{}} \HOLBoundVar{R\sp{\prime}} \HOLSymConst{\HOLTokenRSubset{}} \HOLConst{CC} \HOLBoundVar{R}\hfill{[CC_is_coarsest]}
\HOLTokenTurnstile{} \HOLSymConst{\HOLTokenForall{}}\HOLBoundVar{R} \HOLBoundVar{R\sp{\prime}}. \HOLConst{precongruence} \HOLBoundVar{R\sp{\prime}} \HOLSymConst{\HOLTokenConj{}} \HOLBoundVar{R\sp{\prime}} \HOLSymConst{\HOLTokenRSubset{}} \HOLBoundVar{R} \HOLSymConst{\HOLTokenImp{}} \HOLBoundVar{R\sp{\prime}} \HOLSymConst{\HOLTokenRSubset{}} \HOLConst{CC} \HOLBoundVar{R}\hfill{[CC_is_coarsest']}
\end{alltt}

Given the central role of the  
 sum operator, we also consider the closure of bisimilarity under such
 operator, called \hl{\emph{equivalence compatible with sums}}
(\texttt{SUM_EQUIV}): %%, denoted by $\approx^+$: (not used this symbol)
\begin{alltt}
\HOLConst{SUM_EQUIV} \HOLSymConst{=} (\HOLTokenLambda{}\HOLBoundVar{p} \HOLBoundVar{q}. \HOLSymConst{\HOLTokenForall{}}\HOLBoundVar{r}. \HOLBoundVar{p} \HOLSymConst{\ensuremath{+}} \HOLBoundVar{r} \HOLSymConst{\HOLTokenWeakEQ} \HOLBoundVar{q} \HOLSymConst{\ensuremath{+}} \HOLBoundVar{r})\hfill{[SUM_EQUIV]}
\end{alltt}

% it doesn't satisfy substitutivity on direct sums (but if the CCS syntax
% is non-standard, i.e.~has only prefixed sums, $\approx$ is indeed a
% congruence). The purpose is to find a coarsest congruence contained in
% weak bisimilarity. (``coarsest'' means, any other congruence finer than it must be contained in it)
% There're two ways to build a congruence from weak bisimilarity, one
% way is the standard definition for observational congruence (rooted
% weak bisimilarity) $\rapprox$ in CCS textbooks, but even it's proven to be a
% congruence we don't know if it's coarsest one.  The other way is to
% build a (pre)congruence closure (Def ??) directly upon
% the original weak bisimilarity relation, we call the resulting
% relation ``Weak bisimilarity congruence'' ($[\approx]$):
% \begin{alltt}
% \HOLConst{WEAK_CONGR} \HOLSymConst{=} \HOLConst{CC} \HOLConst{WEAK_EQUIV}
% \end{alltt}
% It can be shown that any such (pre)congruence closure is automatically coarsest.
%
% Now it remains to prove that, the congruence relation built by above
% two quite different approaches actually coincide. To achieve this goal,
% we first noticed that, all other operators beside sums used in
% semantic context doesn't matter, because they're already substitutable
% for weak bisimilarity. The only important operator is the sum
% operator. To focus on this important operator, we can temporally
% introduce another concept called \emph{sum equivalence}:
% \begin{alltt}
% \HOLConst{SUM_EQUIV} \HOLSymConst{=} (\HOLTokenLambda{}\HOLBoundVar{p} \HOLBoundVar{q}. \HOLSymConst{\HOLTokenForall{}}\HOLBoundVar{r}. \HOLBoundVar{p} \HOLSymConst{\ensuremath{+}} \HOLBoundVar{r} \HOLSymConst{\HOLTokenWeakEQ} \HOLBoundVar{q} \HOLSymConst{\ensuremath{+}} \HOLBoundVar{r})
% \end{alltt}
\hl{Rooted bisimilarity $\rapprox$ (a congruence contained in
$\wb$), is now contained in \texttt{WEAK_CONGR},
which in turn is trivially contained in \texttt{SUM_EQUIV}}, as shown
in Fig.~\ref{fig:relationship}. Thus, to prove (\ref{eq:coarsest}),
the crux is to prove that \texttt{SUM_EQUIV} implies
rooted bisimilarity ($\rapprox$), making all three relations
($\rapprox$, \texttt{WEAK_CONGR} and \texttt{SUM_EQUIV}) equivalent:
\begin{equation}
\label{equa:pq}
\forall p\ \ q.\ ( \forall r.\ p\ +\ r \;\approx\; q\ +\ r ) \
\Rightarrow\ p\ \rapprox\ \! q\enspace.
\end{equation}

\begin{figure}[ht]
\begin{displaymath}
\xymatrix@R=3ex{
{\textrm{Weak bisimilarity } (\approx)} & {\textrm{Equiv.
    compatible with sums (\texttt{SUM\_EQUIV})}} \ar@/^3ex/[ldd]^{\subseteq}\\
{\textrm{Bisimilarity congruence (\texttt{WEAK\_CONGR})}}
\ar[u]^{\subseteq} \ar[ru]^{\subseteq} \\
{\textrm{Rooted bisimilarity } (\rapprox)} \ar[u]^{\subseteq}
}
\end{displaymath}
\vspace{-2ex}
\caption{\hl{Relationship between the equivalences mentioned}}
\label{fig:relationship}
\end{figure}
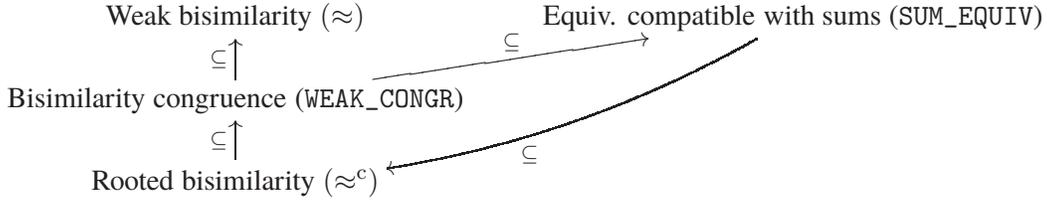

The standard argument \cite{Mil89} requires that $p$
and $q$ do not use up all available labels (i.e.~visible actions).
Formalising such an argument requires however 
a detailed treatment on free and bound names of CCS
processes (with the restriction operator being a binder), not done yet.
However, the proof of (\ref{equa:pq}) can be carried out 
% But it's
% not easy to formalize and use such an assumption without a detailed
% treatment on free and bound names (visible actions) of CCS
% processes.\footnote{There're totally four such concepts: 1) free names
% are all visible actions appearing in a CCS term without surrounding
% $\nu$ (restriction) operator on the same action; 2) bound names are
% the set of all actions ever used by $\nu$ (restriction) operator; 3)
% free variables (or equation variables) are those variables without a
% definition given by recursion
% operator; 4) bound variables (process constants) are variables with
% definitions given by recursion operator. All CCS results using these
% concepts are not touched so far, although these four concepts are
% successfully defined using HOL's set and list theories.} However, by
just assuming that all immediate \emph{weak} derivatives of
 $p$ and $q$ do not use up all \hl{available labels}.
% analyzing the proof steps, we found that, what's really required is to
% not use up all available labels in those weak transitions directly
% lead from $p$ and $q$. In another words, even they have used all
% available labels, as long as their first weak transitions didn't, the
% whole proof can still be finished.\footnote{Further more, $p$ and $q$
%   can be considered separately: the proof can be finished as long as
%   \emph{each} of them didn't use up all labels on first weak
%   transition, while the union of these labels are all labels.}
We have formalised this property and
 called it the \emph{free action} property:
\begin{alltt}
\HOLConst{free_action} \HOLFreeVar{p} \HOLSymConst{\HOLTokenEquiv{}} \HOLSymConst{\HOLTokenExists{}}\HOLBoundVar{a}. \HOLSymConst{\HOLTokenForall{}}\HOLBoundVar{p\sp{\prime}}. \HOLSymConst{\HOLTokenNeg{}}(\HOLFreeVar{p} \HOLTokenWeakTransBegin\HOLConst{label} \HOLBoundVar{a}\HOLTokenWeakTransEnd \HOLBoundVar{p\sp{\prime}})\hfill{[free_action_def]}
\end{alltt}
With this property, the actual formalisation of (\ref{equa:pq}) says:
%\begin{theorem}{(Coarsest congruence contained in $\wb$)}
\vspace{-2ex}
\begin{alltt}
\hfill{[COARSEST_CONGR_RL]}
\HOLTokenTurnstile{} \HOLConst{free_action} \HOLFreeVar{p} \HOLSymConst{\HOLTokenConj{}} \HOLConst{free_action} \HOLFreeVar{q} \HOLSymConst{\HOLTokenImp{}} (\HOLSymConst{\HOLTokenForall{}}\HOLBoundVar{r}. \HOLFreeVar{p} \HOLSymConst{\ensuremath{+}} \HOLBoundVar{r} \HOLSymConst{\HOLTokenWeakEQ} \HOLFreeVar{q} \HOLSymConst{\ensuremath{+}} \HOLBoundVar{r}) \HOLSymConst{\HOLTokenImp{}} \HOLFreeVar{p} \HOLSymConst{\HOLTokenObsCongr} \HOLFreeVar{q}
\end{alltt}
%\end{theorem}

With an almost identical proof, rooted contraction
($\rcontr$) is also the coarsest
precongruence contained in bisimilarity contraction ($\mcontrBIS$)
(the other direction of (\ref{eq:coarsest}) is trivial):
%\begin{theorem}{(Coarsest precongruence contained in $\contr$)}
\vspace{-2ex}
\begin{alltt}
\hfill{[COARSEST_PRECONGR_RL]}
\HOLTokenTurnstile{} \HOLConst{free_action} \HOLFreeVar{p} \HOLSymConst{\HOLTokenConj{}} \HOLConst{free_action} \HOLFreeVar{q} \HOLSymConst{\HOLTokenImp{}} (\HOLSymConst{\HOLTokenForall{}}\HOLBoundVar{r}. \HOLFreeVar{p} \HOLSymConst{\ensuremath{+}} \HOLBoundVar{r} \HOLSymConst{\HOLTokenContracts{}} \HOLFreeVar{q} \HOLSymConst{\ensuremath{+}} \HOLBoundVar{r}) \HOLSymConst{\HOLTokenImp{}} \HOLFreeVar{p} \HOLSymConst{\HOLTokenObsContracts} \HOLFreeVar{q}
\end{alltt}
%\end{theorem}

The formal proofs of above two results precisely follow Milner
\cite{Mil89}. \hl{If only $p$ (or $q$) has free actions while the other uses up all available
labels, the classic assumption $\mathrm{fn}(p) \cup
\mathrm{fn}(q) \neq \mathscr{L}$ (here $\mathrm{fn}$ stands for \emph{free
  names}) does not hold, and the proof cannot be completed.} Our
assumption is a bit \emph{weaker}
in the sense that, $p$ and $q$ do not really need to have the
\emph{same} free action (also, $a$ and $\outC a$ are
\emph{different} actions).

\hl{There exists a different, more complex proof of (\ref{eq:coarsest}),
given by van Glabbeek} \cite{van2005characterisation}, which does not
require any additional assumption.
The core lemma says, for
any two processes $p$ and $q$, if there exists a \emph{stable} (i.e.~$\tau$-free)
 process $k$ which is not bisimilar with any derivative of $p$ and
 $q$, then \texttt{SUM_EQUIV} indeed implies rooted bisimilarity ($\rapprox$):
\begin{alltt}
\HOLTokenTurnstile{} (\HOLSymConst{\HOLTokenExists{}}\HOLBoundVar{k}.
        \HOLConst{STABLE} \HOLBoundVar{k} \HOLSymConst{\HOLTokenConj{}} (\HOLSymConst{\HOLTokenForall{}}\HOLBoundVar{p\sp{\prime}} \HOLBoundVar{u}. \HOLFreeVar{p} \HOLTokenWeakTransBegin\HOLBoundVar{u}\HOLTokenWeakTransEnd \HOLBoundVar{p\sp{\prime}} \HOLSymConst{\HOLTokenImp{}} \HOLSymConst{\HOLTokenNeg{}}(\HOLBoundVar{p\sp{\prime}} \HOLSymConst{\HOLTokenWeakEQ} \HOLBoundVar{k})) \HOLSymConst{\HOLTokenConj{}}
        \HOLSymConst{\HOLTokenForall{}}\HOLBoundVar{q\sp{\prime}} \HOLBoundVar{u}. \HOLFreeVar{q} \HOLTokenWeakTransBegin\HOLBoundVar{u}\HOLTokenWeakTransEnd \HOLBoundVar{q\sp{\prime}} \HOLSymConst{\HOLTokenImp{}} \HOLSymConst{\HOLTokenNeg{}}(\HOLBoundVar{q\sp{\prime}} \HOLSymConst{\HOLTokenWeakEQ} \HOLBoundVar{k})) \HOLSymConst{\HOLTokenImp{}}
   (\HOLSymConst{\HOLTokenForall{}}\HOLBoundVar{r}. \HOLFreeVar{p} \HOLSymConst{\ensuremath{+}} \HOLBoundVar{r} \HOLSymConst{\HOLTokenWeakEQ} \HOLFreeVar{q} \HOLSymConst{\ensuremath{+}} \HOLBoundVar{r}) \HOLSymConst{\HOLTokenImp{}}
   \HOLFreeVar{p} \HOLSymConst{\HOLTokenObsCongr} \HOLFreeVar{q}\hfill{[PROP3_COMMON]}
\end{alltt}
\begin{alltt}
\HOLConst{STABLE} \HOLFreeVar{p} \HOLSymConst{\HOLTokenEquiv{}} \HOLSymConst{\HOLTokenForall{}}\HOLBoundVar{u} \HOLBoundVar{p\sp{\prime}}. \HOLFreeVar{p} \HOLTokenTransBegin\HOLBoundVar{u}\HOLTokenTransEnd \HOLBoundVar{p\sp{\prime}} \HOLSymConst{\HOLTokenImp{}} \HOLBoundVar{u} \HOLSymConst{\HOLTokenNotEqual{}} \HOLSymConst{\ensuremath{\tau}}\hfill{[STABLE]}
\end{alltt}
\hl{To actually get this process $k$, the proof relies on arbitrary infinite sum of 
processes and uses transfinite induction to obtain
an arbitrary large sequence of processes (firstly introduced by Jan
Willem Klop \cite{van2005characterisation})
 that are all pairwise non-bisimilar.} 
We have partially formalised
this proof, because the typed logic
implemented in various HOL systems (including Isabelle/HOL) is not
strong enough to define a type for all possible ordinal values
\cite{norrish2013ordinals}, thus
we have replaced transfinite induction with plain induction. As a
consequence, the final
result is about a restricted class of processes (which we have taken
to be the finite-state processes). This proof uses extensively HOL's
\texttt{pred_set} theory \cite{melham1992hol} and has an interesting mix
of CCS and pure mathematics in it. (c.f. \cite{Tian:2017wrba} for
more details.)

%%%% -*- Mode: LaTeX -*-
%%
%% This is the draft of the 2nd part of EXPRESS/SOS 2018 paper, co-authored by
%% Prof. Davide Sangiorgi and Chun Tian.

\subsection{Unique solution of contractions}

A delicate point in the formalisation of the results about unique solution of
contractions are the proof of Lemma~\ref{l:ruptocon} and lemmas alike;
in particular, there is
 an induction on the length of weak transitions. 
For this, rather than 
 introducing a refined form of weak transition relation
enriched with its length, 
we found it more elegant  to  work with traces
(a motivation for this is to set the ground for extensions of this
formalisation work to trace equivalence in place of bisimilarity).

% but such a non-standard relation finds no
% other uses beside proving our target theorem. Another way is to use 
% traces instead, as it shows more clearly all passing actions inside a
% trace, making formal reasoning easier.

% We represent a trace by the initial process, the final derivative, and
% the list of actions performed. 
% To formalise this, 
% we first introduce 
% the Reflexive Transitive Closure with a
% List (LRTC);

A trace is represented by the initial and final processes, plus
a list of actions  so performed.
For this, we first 
 define \hl{the concept of label-accumulated reflexive transitive closure
 (\texttt{LRTC})}.
Given a labeled transition relation \texttt{R} on CCS, \texttt{LRTC R} is
a label-accumulated relation representing the trace of transitions:
\begin{alltt}
\HOLConst{LRTC} \HOLFreeVar{R} \HOLFreeVar{a} \HOLFreeVar{l} \HOLFreeVar{b} \HOLSymConst{\HOLTokenEquiv{}}
\HOLSymConst{\HOLTokenForall{}}\HOLBoundVar{P}.
    (\HOLSymConst{\HOLTokenForall{}}\HOLBoundVar{x}. \HOLBoundVar{P} \HOLBoundVar{x} [] \HOLBoundVar{x}) \HOLSymConst{\HOLTokenConj{}}
    (\HOLSymConst{\HOLTokenForall{}}\HOLBoundVar{x} \HOLBoundVar{h} \HOLBoundVar{y} \HOLBoundVar{t} \HOLBoundVar{z}. \HOLFreeVar{R} \HOLBoundVar{x} \HOLBoundVar{h} \HOLBoundVar{y} \HOLSymConst{\HOLTokenConj{}} \HOLBoundVar{P} \HOLBoundVar{y} \HOLBoundVar{t} \HOLBoundVar{z} \HOLSymConst{\HOLTokenImp{}} \HOLBoundVar{P} \HOLBoundVar{x} (\HOLBoundVar{h}\HOLSymConst{::}\HOLBoundVar{t}) \HOLBoundVar{z}) \HOLSymConst{\HOLTokenImp{}}
    \HOLBoundVar{P} \HOLFreeVar{a} \HOLFreeVar{l} \HOLFreeVar{b}\hfill{[LRTC_DEF]}
\end{alltt}
\hl{The trace relation for CCS can be then obtained
 by calling \texttt{LRTC} on the (strong, or single-step) labeled transition
 relation \texttt{TRANS} ($\overset{\mu}{\rightarrow}$) defined by SOS rules}:
\begin{alltt}
\HOLConst{TRACE} \HOLSymConst{=} \HOLConst{LRTC} \HOLConst{TRANS}\hfill{[TRACE_def]}
\end{alltt}

\hl{If the list of actions is empty, that means that there is no transition and hence,}
if there is at most one visible action (i.e., a label) in the list of actions,
then the trace is also a weak transition. Here
we have to distinguish between two cases: no label and unique label (in
the list of actions). The definition of ``no
label'' in an action list is easy (here \texttt{MEM} tests if a given element is a member of a list):
\begin{alltt}
\HOLConst{NO_LABEL} \HOLFreeVar{L} \HOLSymConst{\HOLTokenEquiv{}} \HOLSymConst{\HOLTokenNeg{}}\HOLSymConst{\HOLTokenExists{}}\HOLBoundVar{l}. \HOLConst{MEM} (\HOLConst{label} \HOLBoundVar{l}) \HOLFreeVar{L}\hfill{[NO_LABEL_def]}
\end{alltt}

The definition of ``unique label'' \hl{can be done in many ways, the
following definition (a suggestion from Robert Beers)
avoids any counting or filtering in the list.}
It says that a label is unique in a list of actions if and only if there is no
label in the rest of list:
\begin{alltt}
\HOLConst{UNIQUE_LABEL} \HOLFreeVar{u} \HOLFreeVar{L} \HOLSymConst{\HOLTokenEquiv{}}
\HOLSymConst{\HOLTokenExists{}}\HOLBoundVar{L\sb{\mathrm{1}}} \HOLBoundVar{L\sb{\mathrm{2}}}. (\HOLBoundVar{L\sb{\mathrm{1}}} \HOLSymConst{\HOLTokenDoublePlus} [\HOLFreeVar{u}] \HOLSymConst{\HOLTokenDoublePlus} \HOLBoundVar{L\sb{\mathrm{2}}} \HOLSymConst{=} \HOLFreeVar{L}) \HOLSymConst{\HOLTokenConj{}} \HOLConst{NO_LABEL} \HOLBoundVar{L\sb{\mathrm{1}}} \HOLSymConst{\HOLTokenConj{}} \HOLConst{NO_LABEL} \HOLBoundVar{L\sb{\mathrm{2}}}\hfill{[UNIQUE_LABEL_def]}
\end{alltt}

The final relationship between traces and weak transitions is stated
and proved in the following theorem
(where the  variable $acts$ stands for
a list of actions); 
it says, a weak transition $P\overset{u}{\Rightarrow}P'$ is also a
trace $P\overset{acts}{\longrightarrow}P'$ with a
 non-empty action list $acts$, in which either there is no label (for $u = \tau$), or 
$u$ is the unique label (for $u \neq \tau$):
%\begin{lemma}
% A weak transition $P\overset{u}{\Rightarrow}P'$ is a just trace with non
% empty action list: 1) without any visible label, if $u = \tau$, or 2)
% $u$ is the unique label in the list, if $u \neq \tau$.
\begin{alltt}
\HOLTokenTurnstile{} \HOLFreeVar{P} \HOLTokenWeakTransBegin\HOLFreeVar{u}\HOLTokenWeakTransEnd \HOLFreeVar{P\sp{\prime}} \HOLSymConst{\HOLTokenEquiv{}}
   \HOLSymConst{\HOLTokenExists{}}\HOLBoundVar{acts}.
       \HOLConst{TRACE} \HOLFreeVar{P} \HOLBoundVar{acts} \HOLFreeVar{P\sp{\prime}} \HOLSymConst{\HOLTokenConj{}} \HOLSymConst{\HOLTokenNeg{}}\HOLConst{NULL} \HOLBoundVar{acts} \HOLSymConst{\HOLTokenConj{}}
       \HOLKeyword{if} \HOLFreeVar{u} \HOLSymConst{=} \HOLSymConst{\ensuremath{\tau}} \HOLKeyword{then} \HOLConst{NO_LABEL} \HOLBoundVar{acts} \HOLKeyword{else} \HOLConst{UNIQUE_LABEL} \HOLFreeVar{u} \HOLBoundVar{acts}\hfill{[WEAK_TRANS_AND_TRACE]}
\end{alltt}
%\end{lemma}

Now the formalised version of Lemma~\ref{l:uptocon}:
\hfill{\texttt{[UNIQUE_SOLUTION_OF_CONTRACTIONS_LEMMA]}\vspace{-1em}
\begin{alltt}
\begin{small}
\HOLTokenTurnstile{} (\HOLSymConst{\HOLTokenExists{}}\HOLBoundVar{E}. \HOLConst{WGS} \HOLBoundVar{E} \HOLSymConst{\HOLTokenConj{}} \HOLFreeVar{P} \HOLSymConst{\HOLTokenContracts{}} \HOLBoundVar{E} \HOLFreeVar{P} \HOLSymConst{\HOLTokenConj{}} \HOLFreeVar{Q} \HOLSymConst{\HOLTokenContracts{}} \HOLBoundVar{E} \HOLFreeVar{Q}) \HOLSymConst{\HOLTokenImp{}}
   \HOLSymConst{\HOLTokenForall{}}\HOLBoundVar{C}.
       \HOLConst{GCONTEXT} \HOLBoundVar{C} \HOLSymConst{\HOLTokenImp{}}
       (\HOLSymConst{\HOLTokenForall{}}\HOLBoundVar{l} \HOLBoundVar{R}.
            \HOLBoundVar{C} \HOLFreeVar{P} \HOLTokenWeakTransBegin\HOLConst{label} \HOLBoundVar{l}\HOLTokenWeakTransEnd \HOLBoundVar{R} \HOLSymConst{\HOLTokenImp{}}
            \HOLSymConst{\HOLTokenExists{}}\HOLBoundVar{C\sp{\prime}}.
                \HOLConst{GCONTEXT} \HOLBoundVar{C\sp{\prime}} \HOLSymConst{\HOLTokenConj{}} \HOLBoundVar{R} \HOLSymConst{\HOLTokenContracts{}} \HOLBoundVar{C\sp{\prime}} \HOLFreeVar{P} \HOLSymConst{\HOLTokenConj{}}
                (\HOLConst{WEAK_EQUIV} \HOLSymConst{\HOLTokenRCompose{}} (\HOLTokenLambda{}\HOLBoundVar{x} \HOLBoundVar{y}. \HOLBoundVar{x} \HOLTokenWeakTransBegin\HOLConst{label} \HOLBoundVar{l}\HOLTokenWeakTransEnd \HOLBoundVar{y})) (\HOLBoundVar{C} \HOLFreeVar{Q})
                  (\HOLBoundVar{C\sp{\prime}} \HOLFreeVar{Q})) \HOLSymConst{\HOLTokenConj{}}
       \HOLSymConst{\HOLTokenForall{}}\HOLBoundVar{R}.
           \HOLBoundVar{C} \HOLFreeVar{P} \HOLTokenWeakTransBegin\HOLSymConst{\ensuremath{\tau}}\HOLTokenWeakTransEnd \HOLBoundVar{R} \HOLSymConst{\HOLTokenImp{}}
           \HOLSymConst{\HOLTokenExists{}}\HOLBoundVar{C\sp{\prime}}.
               \HOLConst{GCONTEXT} \HOLBoundVar{C\sp{\prime}} \HOLSymConst{\HOLTokenConj{}} \HOLBoundVar{R} \HOLSymConst{\HOLTokenContracts{}} \HOLBoundVar{C\sp{\prime}} \HOLFreeVar{P} \HOLSymConst{\HOLTokenConj{}}
               (\HOLConst{WEAK_EQUIV} \HOLSymConst{\HOLTokenRCompose{}} \HOLConst{EPS}) (\HOLBoundVar{C} \HOLFreeVar{Q}) (\HOLBoundVar{C\sp{\prime}} \HOLFreeVar{Q})
\end{small}
\end{alltt}
\vspace{-1em}
Traces are actually used in the proof of above lemma via 
the following ``unfolding lemma'':\vspace{-1em}
\begin{alltt}
\begin{small}
\HOLTokenTurnstile{} \HOLConst{GCONTEXT} \HOLFreeVar{C} \HOLSymConst{\HOLTokenConj{}} \HOLConst{WGS} \HOLFreeVar{E} \HOLSymConst{\HOLTokenConj{}} \HOLConst{TRACE} ((\HOLFreeVar{C} \HOLSymConst{\HOLTokenCompose} \HOLConst{FUNPOW} \HOLFreeVar{E} \HOLFreeVar{n}) \HOLFreeVar{P}) \HOLFreeVar{xs} \HOLFreeVar{P\sp{\prime}} \HOLSymConst{\HOLTokenConj{}}
   \HOLConst{LENGTH} \HOLFreeVar{xs} \HOLSymConst{\HOLTokenLeq{}} \HOLFreeVar{n} \HOLSymConst{\HOLTokenImp{}}
   \HOLSymConst{\HOLTokenExists{}}\HOLBoundVar{C\sp{\prime}}.
       \HOLConst{GCONTEXT} \HOLBoundVar{C\sp{\prime}} \HOLSymConst{\HOLTokenConj{}} (\HOLFreeVar{P\sp{\prime}} \HOLSymConst{=} \HOLBoundVar{C\sp{\prime}} \HOLFreeVar{P}) \HOLSymConst{\HOLTokenConj{}}
       \HOLSymConst{\HOLTokenForall{}}\HOLBoundVar{Q}. \HOLConst{TRACE} ((\HOLFreeVar{C} \HOLSymConst{\HOLTokenCompose} \HOLConst{FUNPOW} \HOLFreeVar{E} \HOLFreeVar{n}) \HOLBoundVar{Q}) \HOLFreeVar{xs} (\HOLBoundVar{C\sp{\prime}} \HOLBoundVar{Q})\hfill{[unfolding_lemma4]}
\end{small}
\end{alltt}
\vspace{-1em}
It roughly says, for any context $C$ and weakly-guarded context
$E$, if $C [\, E^n[P]\,] \overset{xs}{\Longrightarrow} P'$ and the length
of actions $xs \leqslant n$, then $P$ has the form of $C'[P]$ (meaning
that $P$ is not touched during the transitions). Traces are used for
reasoning about the \hl{number} of intermediate actions in weak
transitions. For instance, from Def.~\ref{d:BisCon}, \hl{it is easy
to see that, a weak transition either becomes shorter
or remains the same when moving between $\mcontrBIS$-related processes}.
\hl{This property is essential} in the proof of
Lemma~\ref{l:uptocon}. We show only one such lemma, for the case of
non-$\tau$ weak transitions passing into $\mcontrBIS$ (from left to right):
\begin{alltt}
\HOLTokenTurnstile{} \HOLFreeVar{P} \HOLSymConst{\HOLTokenContracts{}} \HOLFreeVar{Q} \HOLSymConst{\HOLTokenImp{}}
   \HOLSymConst{\HOLTokenForall{}}\HOLBoundVar{xs} \HOLBoundVar{l} \HOLBoundVar{P\sp{\prime}}.
       \HOLConst{TRACE} \HOLFreeVar{P} \HOLBoundVar{xs} \HOLBoundVar{P\sp{\prime}} \HOLSymConst{\HOLTokenConj{}} \HOLConst{UNIQUE_LABEL} (\HOLConst{label} \HOLBoundVar{l}) \HOLBoundVar{xs} \HOLSymConst{\HOLTokenImp{}}
       \HOLSymConst{\HOLTokenExists{}}\HOLBoundVar{xs\sp{\prime}} \HOLBoundVar{Q\sp{\prime}}.
           \HOLConst{TRACE} \HOLFreeVar{Q} \HOLBoundVar{xs\sp{\prime}} \HOLBoundVar{Q\sp{\prime}} \HOLSymConst{\HOLTokenConj{}} \HOLFreeVar{P} \HOLSymConst{\HOLTokenContracts{}} \HOLFreeVar{Q} \HOLSymConst{\HOLTokenConj{}} \HOLConst{LENGTH} \HOLBoundVar{xs\sp{\prime}} \HOLSymConst{\HOLTokenLeq{}} \HOLConst{LENGTH} \HOLBoundVar{xs} \HOLSymConst{\HOLTokenConj{}}
           \HOLConst{UNIQUE_LABEL} (\HOLConst{label} \HOLBoundVar{l}) \HOLBoundVar{xs\sp{\prime}}\hfill{[contracts_AND_TRACE_label]}
\end{alltt}

\hl{With all above lemmas, we can thus finally prove Theorem~\ref{t:contraBisimulationU}:}
\begin{alltt}
\HOLTokenTurnstile{} \HOLConst{WGS} \HOLFreeVar{E} \HOLSymConst{\HOLTokenImp{}} \HOLSymConst{\HOLTokenForall{}}\HOLBoundVar{P} \HOLBoundVar{Q}. \HOLBoundVar{P} \HOLSymConst{\HOLTokenContracts{}} \HOLFreeVar{E} \HOLBoundVar{P} \HOLSymConst{\HOLTokenConj{}} \HOLBoundVar{Q} \HOLSymConst{\HOLTokenContracts{}} \HOLFreeVar{E} \HOLBoundVar{Q} \HOLSymConst{\HOLTokenImp{}} \HOLBoundVar{P} \HOLSymConst{\HOLTokenWeakEQ} \HOLBoundVar{Q}
\hfill{[UNIQUE_SOLUTION_OF_CONTRACTIONS]}
\end{alltt}
\vspace{-2ex}

\subsection{Unique solution of rooted contractions}

The formal proof of ``unique solution of rooted contractions theorem''
(Theorem~\ref{t:rcontraBisimulationU}) has the
same initial proof steps as Theorem~\ref{t:contraBisimulationU}; 
it then requires a
few more steps to handle  rooted bisimilarity in the conclusion. 
Overall  the
two proofs are very similar, mostly because the only property we need
from (rooted) contraction is its precongruence. 
 Below is the formally verified version of
Theorem~\ref{t:rcontraBisimulationU}
(having proved
the precongruence of rooted contraction, 
we can use  weakly-guarded expressions,   without constraints on  sums;
that is, \texttt{WG} in place of \texttt{WGS}):
\vspace{-2ex}
\begin{alltt}
\hfill{[UNIQUE_SOLUTION_OF_ROOTED_CONTRACTIONS]}
\HOLTokenTurnstile{} \HOLConst{WG} \HOLFreeVar{E} \HOLSymConst{\HOLTokenImp{}} \HOLSymConst{\HOLTokenForall{}}\HOLBoundVar{P} \HOLBoundVar{Q}. \HOLBoundVar{P} \HOLSymConst{\HOLTokenObsContracts} \HOLFreeVar{E} \HOLBoundVar{P} \HOLSymConst{\HOLTokenConj{}} \HOLBoundVar{Q} \HOLSymConst{\HOLTokenObsContracts} \HOLFreeVar{E} \HOLBoundVar{Q} \HOLSymConst{\HOLTokenImp{}} \HOLBoundVar{P} \HOLSymConst{\HOLTokenObsCongr} \HOLBoundVar{Q}
\end{alltt}

Having removed the  constraints on sums, the result is
 similar to Milner's original `unique solution of
equations' theorem for \emph{strong} bisimilarity ($\sim$)~--- 
the same weakly guarded context (\texttt{WG}) is required:
\begin{alltt}
\HOLTokenTurnstile{} \HOLConst{WG} \HOLFreeVar{E} \HOLSymConst{\HOLTokenImp{}} \HOLSymConst{\HOLTokenForall{}}\HOLBoundVar{P} \HOLBoundVar{Q}. \HOLBoundVar{P} \HOLSymConst{\HOLTokenStrongEQ} \HOLFreeVar{E} \HOLBoundVar{P} \HOLSymConst{\HOLTokenConj{}} \HOLBoundVar{Q} \HOLSymConst{\HOLTokenStrongEQ} \HOLFreeVar{E} \HOLBoundVar{Q} \HOLSymConst{\HOLTokenImp{}} \HOLBoundVar{P} \HOLSymConst{\HOLTokenStrongEQ} \HOLBoundVar{Q}\hfill{[STRONG_UNIQUE_SOLUTION]}
\end{alltt}

In contrast, Milner's ``unique solution of
equations'' theorem for rooted bisimilarity ($\rapprox$)
has more severe constraints (must be both strongly guarded and sequential):
% Or our Theorem~\ref{t:rcontraBisimulationU} can be seen as a more
% applicable version of Milner's ``unique solution of
% equations'' theorem for rooted bisimilarity ($\rapprox$), which has more
% restrictions on equations:
\vspace{-2ex}
\begin{alltt}
\hfill{[OBS_UNIQUE_SOLUTION]}
\HOLTokenTurnstile{} \HOLConst{SG} \HOLFreeVar{E} \HOLSymConst{\HOLTokenConj{}} \HOLConst{SEQ} \HOLFreeVar{E} \HOLSymConst{\HOLTokenImp{}} \HOLSymConst{\HOLTokenForall{}}\HOLBoundVar{P} \HOLBoundVar{Q}. \HOLBoundVar{P} \HOLSymConst{\HOLTokenObsCongr} \HOLFreeVar{E} \HOLBoundVar{P} \HOLSymConst{\HOLTokenConj{}} \HOLBoundVar{Q} \HOLSymConst{\HOLTokenObsCongr} \HOLFreeVar{E} \HOLBoundVar{Q} \HOLSymConst{\HOLTokenImp{}} \HOLBoundVar{P} \HOLSymConst{\HOLTokenObsCongr} \HOLBoundVar{Q}
\end{alltt}

% part 3
\section{Related work on formalisation}
\label{s:rel}

\hl{Monica Nesi did the first CCS formalisations for both pure and
value-passing CCS} \cite{Nesi:1992ve,Nesi:2017wo} using early versions of the HOL
theorem prover.\footnote{Part of this work can now be found at
  \url{https://github.com/binghe/HOL-CCS/tree/master/CCS-Nesi}.}
Her main focus was on implementing decision procedures (as a ML
program, e.g.~\cite{cleaveland1993concurrency}) for
automatically proving bisimilarities of CCS
processes. %, thus she stopped proving deeper theorems in Milner's
           %book, once the congruence and algebraic laws of 
%strong, weak and rooted bisimilarity were proved.
\hl{Her work is
  the working basis of ours. However, the differences are substantial, especially in the way of defining
bisimilarities. We greatly benefited from features and standard
libraries in recent versions of HOL4, and our formalisation has
covered a much larger spectrum of the (pure) CCS theory.}

Bengtson, Parrow and Weber did a substantial formalisation work
on CCS, $\pi$-calculi
and $\psi$-calculi 
using Isabelle/HOL and its \texttt{nominal} logic, with main focus on the handling of
name binders \cite{bengtson2010formalising,bengtson2007completeness,parrow2009formalising}.
% Jesper Bengtson's PhD thesis 
%  \cite{bengtson2010formalising} looks  at 
% formalizations of CCS, $\pi$-calculi and $\psi$-calculi
% into Isabelle/HOL and its
% nominal logic. For CCS, 
% he has 
%  formalized
%  basic properties for strong/weak equivalence (congruence, basic
%  algebraic laws), for a  and observational
% congruence, however the CCS syntax doesn't have constants
% (or recursion operator), using replication in place of recursion.
% The formalisation effort has then been continued 
%
% On the other side, Jesper
% Bengtson and Joachim Parrow made great progress on $\pi$-calculus
% formalization and 
% proved that the algebraic axiomatization of bisimulation
% equivalence in the $\pi$-calculi is sound and
% complete. \cite{bengtson2007completeness} 
Other formalisations in this area include the early work of T. F. Melham
\cite{melham1994mechanized} and O.A. Mohamed
\cite{mohamed1995mechanizing} in HOL, Compton
\cite{compton2005embedding} in Isabelle/HOL,
Solange\footnote{\url{https://github.com/coq-contribs/ccs}} in Coq
and Chaudhuri et al.\;\cite{chaudhuri2015lightweight} in Abella, the latter
focuses on `bisimulation up-to' techniques %for strong bisimilarity
for CCS and $\pi$-calculus.
Damien Pous \cite{pous2007new} also formalised up-to techniques and some CCS examples in
Coq.
Formalisations less related to ours
include Kahsai and Miculan \cite{kahsai2008implementing} for the spi
calculus in Isabelle/HOL, and Hirschkoff \cite{hirschkoff1997full} for the $\pi$-calculus in Coq.

%%%% -*- Mode: LaTeX -*-

\section{Conclusions and future work}
\label{s:concl}

In this paper, we have highlighted a formalisation of the theory of CCS in the 
HOL4 theorem prover (for lack of space we have not discussed 
the formalisation of some basic algebraic theory, of the basic
properties of the expansion preorder,   and of a few
 versions of `bisimulation up to'
techniques). % such as  bisimulation up-to bisimilarity). 
The formalisation has focused on the theory of
unique solution of equations and contractions. 
It has also allowed us to further develop the theory,
notably the basic properties of rooted contraction, and the unique
solution theorem for it with respect to rooted bisimilarity. 
The formalisation brings up and exploits similarities between results
and proofs for different equivalences and preorders. 
We think that the statements in the formalisation are easy to read and
understand, as they are very close to the original statements found in
standard CCS textbooks \cite{Gorrieri:2015jt,Mil89}.

For the future work, it would be worth extending
to multi-variable equations/contractions. \hl{A key aspect could be using unguarded constants as free variables
(\texttt{FV}) and defining guardedness directly on expressions of type CCS (instead of
CCS $\rightarrow$ CCS), then linking to contexts. For instance, an expression is weakly-guarded when each
of its free variables, replaced by a hole, results in a weakly-guarded context:}
\begin{alltt}
\HOLTokenTurnstile{} \HOLConst{weakly_guarded1} \HOLFreeVar{E} \HOLSymConst{\HOLTokenEquiv{}}
   \HOLSymConst{\HOLTokenForall{}}\HOLBoundVar{X}. \HOLBoundVar{X} \HOLSymConst{\HOLTokenIn{}} \HOLConst{FV} \HOLFreeVar{E} \HOLSymConst{\HOLTokenImp{}} \HOLSymConst{\HOLTokenForall{}}\HOLBoundVar{e}. \HOLConst{CONTEXT} \HOLBoundVar{e} \HOLSymConst{\HOLTokenConj{}} (\HOLBoundVar{e} (\HOLConst{var} \HOLBoundVar{X}) \HOLSymConst{=} \HOLFreeVar{E}) \HOLSymConst{\HOLTokenImp{}} \HOLConst{WG} \HOLBoundVar{e}
\end{alltt}

Formalising other equivalences and preorders could also be considered,
notably the trace equivalences, as well as more refined process
calculi such as value-passing CCS.
% (e.g.~exploiting the type variable
%of actions).}
%
On another research line, one could examine the formalisation of a different
approach \cite{DurierHS17} to unique
solutions, in which the use of contraction is
replaced by semantic conditions on process transitions such as
divergence. 
%We hope this work also inspires new formalisations on other process calculi.

% Further plan on the formalisation mainly includes: 1) the extension to
% multi-variable equation/contractions. 2) the support of recursion
% operators in 
% CCS context and expressions.

% We believe that, beside the discovery of the rooted contraction $\rcontr$
% with a more elegant unique solution theorem,
% our CCS formalisation in HOL4 has also
% provided a solid formal framework for future theoretical developments in
% Concurrency Theory, particularly for process algebras like CCS. It's
% easily understandable, with statements extremely close to the original
% textbook. The logic foundations of HOL makes the whole work (or individual
% parts) easily portable to other theorem provers.

\paragraph{Acknowledgements}

We have benefitted from suggestions and comments 
from several people from the HOL
community, including (in alphabet order) Robert Beers, Jeremy Dawson,
Ramana Kumar,
Michael Norrish, 
Konrad Slind, and
Thomas T\"{u}rk.
The second half of this
paper was written in memory of Michael J.~C.~Gordon, the creator of HOL theorem prover.

\newpage

\bibliographystyle{eptcs}
\bibliography{generic}
\end{document}